\documentclass{amsart}

\usepackage{graphicx}
\usepackage{latexsym,color}
\usepackage{graphicx}
\usepackage{amssymb}
 \usepackage{amsfonts}
 \usepackage{amsmath}
\usepackage{amsthm}
\usepackage[final]{changes}
\newtheorem{thm}{Theorem}[section]
 
 \newtheorem{lem}[thm]{Lemma}
 \newtheorem{rem}[thm]{Remark}

\newtheorem{prop}[thm]{Proposition}
\newtheorem{asm}[thm]{Assumption}

\numberwithin{equation}{section}

\begin{document}

\title{Market Delay and $G$--expectations}
\thanks{$^*$Corresponding author}

   \author{Yan Dolinsky$^{*\ddag}$ \address{
 School of Mathematical Sciences, Monash University, Melbourne. \\
 e.mail: yan.dolinsky@mail.huji.ac.il}}${}$\\
\author{Jonathan Zouari$^\dagger$ \address{
 Department of Statistics, Hebrew University of Jerusalem, Israel.\\
 e.mail: jonathan.zouari@mail.huji.ac.il}
  ${}$\\
Hebrew University$^{\ddag\dagger}$ and Monash University$^\ddag$}

\date{\today}

\begin{abstract}
  We study super-replication of contingent claims in markets with
  delayed filtration.
   The first result in
this paper reveals that in the Black--Scholes model with constant delay
the super-replication price is prohibitively costly and leads to
trivial buy-and-hold strategies.
Our second result
  says that the scaling limit of super--replication prices for
  binomial models with a fixed number of times of delay $H$ is equal to the $G$--expectation with
  volatility uncertainty interval $[0,\sigma\sqrt{H+1}]$.
  \end{abstract}

\subjclass[2010]{91G10, 91G20, 60F05}
 \keywords{Super--replication, Market delay, Duality, $G$-expectation}%
\maketitle \markboth{Y.Dolinsky and J.Zouari}{Market Delay and $G$--expectations}
\renewcommand{\theequation}{\arabic{section}.\arabic{equation}}
\pagenumbering{arabic}

\section{Introduction}
\label{intro}

This paper deals with super-replication of European options in
financial markets with delayed information.
This corresponds to the case
where there is a time delay in receiving market information (or in applying it), which causes
the trader’s filtration simply being a delayed one in comparison to a price filtration.

Although the topic of
hedging with delay and restricted information
enjoyed a considerable
attention in the literature
(see, for instance, \cite{CKT,DPR,F,IM, Ka,KS,KX,MTT,S,SZ}), to the best of
our knowledge, super--replication in this setup was studied only in \cite{CKT,IM}.
Recently, in \cite{CKT} the authors studied
the fundamental theorem of asset pricing and the super--replication problem in a
general continuous time two filtration setting. Their setting includes the delay setup.

We adopt the setup from \cite{CKT} (for delay) and study in details the super--replication price in
the Black--Scholes model and the continuous time limit
of the binomial models.
Our first result says that for the Black--Scholes model with a constant delay $h>0$,
the cheapest way to super-replicate a vanilla option is to apply
a trivial buy-and-hold strategy.
The main idea of the proof is to use the duality theory which was developed in
\cite{CKT} and the Girsanov theorem in order to
construct a rich enough family of pricing measures.

Next, we overcome this negative result by considering scaling limits
of super-replication prices in the binomial models.
We fix a natural number $H$ and assume that
the delay in the received information equals to $H$ trading times.
Thus, we let the continuous time delay $\frac{H}{n}$ tend to zero linearly in the time step.
With this type of scaling we prove that the
super--replication prices in the binomial models with constant volatility $\sigma>0$ converge
to the $G$--expectation of Peng \cite{P1} with uncertainty interval $[0,\sigma \sqrt{H+1}]$.
We prove this result under quite general assumptions which allow to consider path dependent payoffs.
In particular, when the payoff is convex (can be path dependent) we converge to the Black--Scholes price
with increased volatility $\sigma \sqrt{H+1}$.

The above result is inspired by a recent work of Ichiba and Mousavi \cite{IM}
where the authors considered super--replication in binomial
models with delay. The authors setup was a bit different
and was restricted to contingent claims with convex and path--independent payoffs.
Our approach is to apply the duality theory
from \cite{CKT} and analyze
the asymptotic behaviour
of the corresponding pricing measures (the dual objects).
This approach allows to treat a more general setup (than in \cite{IM}) and provides an additional intuition for the
limit.

It is important to mention the paper \cite{AHMP} where the authors model the risky asset via
a non-linear stochastic differential delay equation in a Brownian framework.
The main difference is that in their model the filtration is generated by the risky asset, and so the corresponding
financial market is complete.
Of course this is not true in our case.

The paper is organized as follows. In Section 2 we
show that in a Black--Scholes model with a fixed delay,
trivial buy-and-hold strategies yield optimal super--replication.
In Section 3 we give the scaling limit of super--replication
prices
in the binomial models with vanishing delay.

\section{The Black--Scholes model with constant delay}\label{sec:7}\setcounter{equation}{0}
\subsection{Preliminaries and the buy--and--hold propery}
Consider a complete probability space
$(\Omega^W,\mathcal F^W,\mathbb P^W)$
together with a standard one dimensional Brownian motion
$W=\{W_t\}_{t=0}^\infty$
and the filtration $\mathcal F^W_t = \sigma\{W_u| u\leq t\}$ completed
by the null sets. Our Black--Scholes financial market consists of a safe asset $B$ used as
numeraire, hence $B\equiv 1$, and of a risky asset $S$ whose value at time $t$ is given by
\begin{equation*}
S_t=s e^{\sigma W_t+\mu t}, \ \ s>0
\end{equation*}
where $\sigma>0$ is called volatility and $\mu\in\mathbb{R}$ is
another constant called the drift.

Next, we describe the super--replication setup from \cite{CKT} for the delay setup.
Let $T=1$ be the horizon of our financial market.
We fix a constant delay parameter $h>0$ and consider a setup where the control of the investor at time $t$
can depend only on the information observed before time $t-h$. In this setup
a simple trading strategy is a stochastic process
$$\gamma_t=\sum_{i=1}^l \gamma_i\mathbb{I}_{(t_{i-1},t_i]}$$
where $0=t_0<t_1<...<t_l=1$ is a determinisitc partition and
for any $i$, $\gamma_i$ is a random varible $\mathcal F^W_{(t_{i-1}-h)^{+}}$ measurable. The corresponding portfolio value at the maturity date equals to
$$Y^{\gamma}_1=\sum_{i=1}^{l} \gamma_{t_{i-1}}(S_{t_i}-S_{t_{i-1}}).$$
We denote by $\mathcal C$ the convex cone of all super--replicable claims
(by simple strategies), that is
$$\mathcal C=\{Y^{\gamma}_1: \ \gamma \ \mbox{is} \ \mbox{simple}\}-L^0_{+}(\mathcal F^W_1,\mathbb P^W).$$
For a given $p>1$ and a European contingent claim $X\in L^p(\mathcal F^W_1,\mathbb P^W)$ we define the super--hedging price by
$$V(X)=\inf\{x: \exists U\in \overline{\mathcal C_p} \ \mbox{such} \ \mbox{that} \ x+U\geq X \ \mbox{a.s.}\}$$ where
$\overline{\mathcal C_p}$ is the closure in $L^p(\mathcal F^W_1,\mathbb P^W)$
of the set $\mathcal C_p:=\mathcal C\cap L^p(\mathcal F^W_1,\mathbb P^W)$.

From the duality result
 Theorem 4.4 in \cite{CKT} we obtain that
 \begin{equation}\label{tei}
 V(X)=\sup_{\mathbb Q\in \mathbb M^q} \mathbb E_{\mathbb Q}[X]
 \end{equation}
 where $\mathbb M^q$ is the set of all equivalent probability measures
 $\mathbb Q\sim\mathbb P^W$ such that
 $\frac{d\mathbb Q}{d\mathbb P^W}{|\mathcal F^W_1}\in L^q(\mathcal F^W_1,\mathbb P^W)$
 (where $\frac{1}{q}+\frac{1}{p}=1$) and
 $$\mathbb E_{\mathbb Q}\left(S_{T_2}-S_{T_1}|\mathcal F^W_{(T_1-h)^{+}}\right)=0
 \ \mbox{for} \ \mbox{all} \ T_2\geq T_1.$$

 We arrive to the first result of the paper.
 \begin{thm}\label{thm7.0}
 Let $X=f(S_1)$ where
 $f:\mathbb R_{+}\rightarrow\mathbb R_{+}$ is a nonnegative
 and lower semi--continuous function. Then the super-replication price of $X$
satisfies
$$V(X)=\hat f(s)$$
where $\hat f$ denotes
the concave envelope of $f$ (the joint value can be equal to $\infty$). Moreover,
if $\hat f<\infty$ then an optimal (buy--and--hold) strategy exists
and is explicitly defined by
$$\gamma\equiv \partial_{+}\hat{f}(s).$$
\end{thm}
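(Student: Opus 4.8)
The plan is to combine the duality \eqref{tei} with a Girsanov construction of a sufficiently rich family of pricing measures; throughout $q$ denotes the exponent conjugate to $p$. The bound $V(X)\le\hat f(s)$ is immediate: for any $\mathbb Q\in\mathbb M^q$, choosing $T_1=0$, $T_2=1$ in the defining constraint and using that $\mathcal F^W_0$ is trivial gives $\mathbb E_{\mathbb Q}[S_1]=s$, whence by $f\le\hat f$, concavity of $\hat f$ and Jensen's inequality
\[
\mathbb E_{\mathbb Q}[f(S_1)]\ \le\ \mathbb E_{\mathbb Q}[\hat f(S_1)]\ \le\ \hat f\big(\mathbb E_{\mathbb Q}[S_1]\big)\ =\ \hat f(s),
\]
so the supremum in \eqref{tei} is $\le\hat f(s)$ (the case $\hat f(s)=\infty$ being vacuous).

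For the reverse inequality I would first reduce to a statement about achievable laws of $S_1$. By Carath\'eodory's theorem applied to the convex hull of $\{(x,f(x))\}\subset\mathbb R^2$, $\hat f(s)=\sup\{\lambda f(a)+(1-\lambda)f(b):\ \lambda\in[0,1],\ a,b\ge0,\ \lambda a+(1-\lambda)b=s\}$; combining this with the lower semicontinuity of $f$ (so $f\ge f(a)-\eta$ near $a$ and $f\ge f(b)-\eta$ near $b$), it suffices to show that for every admissible triple $(a,b,\lambda)$ and every $\varepsilon>0$ there exists $\mathbb Q\in\mathbb M^q$ with $\mathbb Q(|S_1-a|<\varepsilon)>\lambda-\varepsilon$ and $\mathbb Q(|S_1-b|<\varepsilon)>1-\lambda-\varepsilon$. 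The same reduction covers $\hat f(s)=\infty$, by letting the triple run so that $\lambda f(a)+(1-\lambda)f(b)\to\infty$.

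Such a $\mathbb Q$ would be built by Girsanov, $\frac{d\mathbb Q}{d\mathbb P^W}\big|_{\mathcal F^W_1}=\mathcal E\big(\int_0^{\cdot}\theta_u\,dW_u\big)_1$ with a \emph{bounded} adapted drift $\theta$ (so the density lies in every $L^p$, in particular $L^q$). Under $\mathbb Q$ one has $dS_t=S_t(\sigma\,dW^{\mathbb Q}_t+\phi_t\,dt)$ with $\phi_t:=\sigma\theta_t+\mu+\tfrac12\sigma^2$, and a routine conditioning/Fubini argument (using that the stochastic integral part is a true $\mathbb Q$--martingale) shows the constraint defining $\mathbb M^q$ is equivalent to $\mathbb E_{\mathbb Q}[S_t\phi_t\mid\mathcal F^W_{(t-h)^+}]=0$ for a.e.\ $t\in[0,1]$. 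The decisive feature is that on $[0,h]$ this is only the \emph{unconditional} requirement $\mathbb E_{\mathbb Q}[S_t]\equiv s$, while on $(h,1]$ it is conditional on the $h$--delayed information. I would therefore use $[0,h]$, where only the running mean is constrained, to load dispersion into $S_h$ by choosing $\theta$ there so that $\mathrm{Law}_{\mathbb Q}(S_h)$ is weakly as close as desired to $\lambda\delta_a+(1-\lambda)\delta_b$ while $\mathbb E_{\mathbb Q}[S_t]\equiv s$ on $[0,h]$; and on $(h,1]$ drive $S$ by a strong mean reversion toward $g:=a\,\mathbb{I}_A+b\,\mathbb{I}_B$, where $A,B$ are the two ``branches'' fixed at time $h$, so that $g$ is $\mathcal F^W_h$--measurable with $\mathbb E_{\mathbb Q}[g]=s$. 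This respects the constraint because once $t$ is past a short window after $h$ the target $g$ is already $\mathcal F^W_{(t-h)^+}$--measurable and $S_t$ sits near it (conditional drift $\approx0$), while for $t$ in that window $\mathcal F^W_{(t-h)^+}$ is small so the conditional drift is essentially the unconditional one, which vanishes precisely because $\lambda a+(1-\lambda)b=s$ (the upward push on one branch paying for the downward push on the other). As the first--phase concentration and the second--phase mean--reversion rate go to infinity, $\mathrm{Law}_{\mathbb Q}(S_1)\to\lambda\delta_a+(1-\lambda)\delta_b$ weakly. The hard part, and the place where the real work lies, is exactly this second phase: producing an \emph{adapted} drift that keeps $S_1$ concentrated \emph{and} meets the constraint for \emph{all} $T_1\le T_2$ (not merely on a grid) amounts to the fixed--point relation $\mathbb E_{\mathbb Q}[S_t\mid\mathcal F^W_{(t-h)^+}]=\mathbb E_{\mathbb Q}[g\mid\mathcal F^W_{(t-h)^+}]$ on $(h,1]$ — feasible because strong mean reversion forces the two sides together — carried out together with the uniform $L^q$--control of the densities and a weak--limit passage based only on lower semicontinuity of $f$.

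Finally, for the buy--and--hold optimality: when $\hat f(s)<\infty$ the concave nonnegative function $\hat f$ is finite on all of $\mathbb R_+$, hence continuous with a finite right derivative at $s$, and its supporting line at $s$ gives
\[
f(x)\ \le\ \hat f(x)\ \le\ \hat f(s)+\partial_+\hat f(s)\,(x-s)\qquad\text{for all }x\ge0.
\]
Since $\gamma\equiv\partial_+\hat f(s)$ is constant, hence $\mathcal F^W_{(0-h)^+}$--measurable, the one--period strategy $\gamma_t\equiv\gamma$ is simple with terminal value $Y^\gamma_1=\partial_+\hat f(s)\,(S_1-s)$, so $\hat f(s)+Y^\gamma_1\ge f(S_1)=X$ a.s.; thus the initial capital $\hat f(s)$ super--replicates $X$ by this buy--and--hold strategy, and combined with $V(X)\ge\hat f(s)$ this yields $V(X)=\hat f(s)$ with the displayed strategy optimal.
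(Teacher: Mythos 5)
Your upper bound and the buy--and--hold part are correct and coincide with the paper's argument (the paper only uses the supporting--line super--hedge; your additional Jensen computation via $\mathbb E_{\mathbb Q}[S_1]=s$ is a valid alternative). For the lower bound you take a genuinely different route: you reduce, via Carath\'eodory and lower semicontinuity, to constructing measures in $\mathbb M^q$ under which $\mathrm{Law}(S_1)$ approximates $\lambda\delta_a+(1-\lambda)\delta_b$. The paper instead shows (Lemma \ref{lem7.1} and Proposition \ref{prop7.1}) that \emph{every} bounded local--volatility model $S^{(\alpha)}$ is a weak cluster point of $(S,\mathbb Q_n)$ with $\mathbb Q_n\in\mathbb M^q$, and then invokes Neufeld's buy--and--hold lemma to pass from $\sup_\alpha\mathbb E[f(S^{(\alpha)}_1)]$ to $\hat f(s)$. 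Your reduction is legitimate and arguably more self--contained, but it places the entire burden on the explicit construction of the approximating measures --- and that is exactly where your argument has a gap.

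The gap is the exactness of the constraint defining $\mathbb M^q$. Membership requires $\mathbb E_{\mathbb Q}[S_{T_2}\mid\mathcal F^W_{(T_1-h)^+}]=\mathbb E_{\mathbb Q}[S_{T_1}\mid\mathcal F^W_{(T_1-h)^+}]$ to hold \emph{exactly} for a continuum of pairs $(T_1,T_2)$; equivalently, for every $u$ the map $t\mapsto\mathbb E_{\mathbb Q}[S_t\mid\mathcal F^W_u]$ must be constant on $[u+h,1]$. Your second phase asserts that ``strong mean reversion forces the two sides together,'' but mean reversion toward $g$ only yields \emph{approximate} constancy: if $S_{u+h}\neq g$ on some event in $\mathcal F^W_u$ of positive probability, then $\mathbb E_{\mathbb Q}[S_t\mid\mathcal F^W_u]$ drifts from $\mathbb E_{\mathbb Q}[S_{u+h}\mid\mathcal F^W_u]$ toward $\mathbb E_{\mathbb Q}[g\mid\mathcal F^W_u]$ and the constraint fails; making it hold exactly for all $u$ is a genuine fixed--point problem that you name but do not solve. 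The same issue arises in your first phase, where ``$\mathbb E_{\mathbb Q}[S_t]\equiv s$'' must hold identically in $t$ even while the drift is enormous. The paper's resolution is precisely the content of Lemma \ref{lem7.1}: the drift is defined \emph{implicitly} through the deterministic normalizing function $g^{(n)}_t=\mathbb E_{\mathbb P^W}[\cdots]$, which is chosen so that $\mathbb E_{\tilde{\mathbb Q}_n}[S_t]=s$ holds exactly for every $t$, together with a rapidly oscillating bounded drift $\kappa^{(n)}$ that manufactures the fake volatility $\nu$; the continuum of conditional constraints is then handled in Proposition \ref{prop7.1} by working on a partition of mesh strictly less than $h$, enforcing the conditional identity $\mathbb E_{\mathbb Q_n}[S_t\mid\mathcal F^W_{t_i}]=S_{t_i}$ on each piece, and reducing every pair $(T_1,T_2)$ to a partition point $t_k>T_1-h$ via the tower property. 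Without this (or an equivalent exact renormalization and discretization device), your construction does not produce elements of $\mathbb M^q$, so the lower bound is not established.
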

\begin{rem}
By the cash-invariance property of $V(X)$,
the condition that
$f:\mathbb R_{+}\rightarrow\mathbb R_{+}$ is nonnegative could be relaxed by the requirement to be bounded from below.
\end{rem}
 \subsection{Proof of Theorem \ref{thm7.0}}
 In this section we prove Theorem \ref{thm7.0}.
 We start with the following auxiliary result.
 \begin{lem}\label{lem7.1}
Let $\nu>0$ be a given constant. There exists a sequence of probability measures
$\tilde{\mathbb Q}_n\sim\mathbb P^W$, $n\in\mathbb N$ such that:
\begin{equation}\label{7.1}
\mathbb E_{\tilde{\mathbb Q}_n}[S_t]=s, \ \forall  t\in [0,1]
\end{equation}
and
\begin{equation}\label{7.2}
(S,\tilde{\mathbb Q}_n)\Rightarrow (\{s e^{\nu W_t-\nu^2 t/2}\}_{t=0}^1,\mathbb P^W).
\end{equation}
The relation given by (\ref{7.2}) means that
the distribution of
$\{S_t\}_{t=0}^1$ under
$\tilde{\mathbb Q}_n$ converges to
the distribution of $\{s e^{\nu W_t-\nu^2 t/2}\}_{t=0}^1$ under $\mathbb P^W$.
The weak convergence is on the space of continuous functions $C([0,1];\mathbb R)$ equipped
with the topology of uniform convergence.
\end{lem}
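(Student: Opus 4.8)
The plan is to construct $\tilde{\mathbb Q}_n$ by a Girsanov change of measure that slows the clock of the Brownian motion driving $S$ on a fine grid, while keeping $S$ a martingale under $\tilde{\mathbb Q}_n$ so that \eqref{7.1} holds exactly. Concretely, fix the grid $t^n_k = k/n$, $k=0,\dots,n$. Under $\mathbb P^W$ the increments $\Delta_k W = W_{t^n_k}-W_{t^n_{k-1}}$ are i.i.d.\ $\mathcal N(0,1/n)$. I want to define a density $\frac{d\tilde{\mathbb Q}_n}{d\mathbb P^W}$ so that, under $\tilde{\mathbb Q}_n$, the process $\log(S_t/s)$ on the grid behaves like a random walk whose increments have variance $\nu^2/n$ rather than $\sigma^2/n$ (with the compensating $-\nu^2/(2n)$ drift), so that the Donsker-type invariance principle of the rescaled walk gives \eqref{7.2}; meanwhile $S$ stays a true martingale. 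The natural device is a stochastic exponential: on each interval $(t^n_{k-1},t^n_k]$ let $\tilde{\mathbb Q}_n$ change the law of $\Delta_k W$ (conditionally on $\mathcal F^W_{t^n_{k-1}}$) to that of a mixture or scaled Gaussian chosen so that (i) $\mathbb E_{\tilde{\mathbb Q}_n}\!\left(e^{\sigma\Delta_k W}\mid \mathcal F^W_{t^n_{k-1}}\right) = e^{-\mu/n}$, which forces $\mathbb E_{\tilde{\mathbb Q}_n}(S_{t^n_k}\mid\mathcal F^W_{t^n_{k-1}}) = S_{t^n_{k-1}}$ and hence \eqref{7.1}; and (ii) $\sigma\Delta_k W$ has, under $\tilde{\mathbb Q}_n$, the same first two moments as $\nu\,\Delta_k W - \nu^2/(2n)$ up to $o(1/n)$ errors. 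One clean way to achieve both is to replace each Gaussian increment by a three-point distribution on $\{-a_n,0,a_n\}$ with atom weights depending on $n$, or to tilt $\Delta_k W$ by an $\mathcal F^W_{t^n_{k-1}}$-measurable drift plus a deterministic variance rescaling — either yields an explicit, strictly positive density, so $\tilde{\mathbb Q}_n\sim\mathbb P^W$.

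The key steps, in order, are: (1) write down the per-step density $\rho_k$ and verify it is strictly positive and integrates to $1$ conditionally, so the product $\prod_{k=1}^n \rho_k$ defines $\frac{d\tilde{\mathbb Q}_n}{d\mathbb P^W}$ and $\tilde{\mathbb Q}_n\sim\mathbb P^W$; (2) check the martingale identity $\mathbb E_{\tilde{\mathbb Q}_n}(S_{t^n_k}\mid\mathcal F^W_{t^n_{k-1}})=S_{t^n_{k-1}}$ by the defining moment condition, then extend from grid points to all $t\in[0,1]$ — for this I interpolate $S$ between grid points by its conditional-expectation martingale (or simply note that between grid points $S$ under $\tilde{\mathbb Q}_n$ can be taken as the original $\mathbb P^W$-geometric-Brownian-motion bridge conditioned appropriately), giving \eqref{7.1} for every $t$; (3) prove the functional weak convergence \eqref{7.2}: show that under $\tilde{\mathbb Q}_n$ the discrete log-price $X^n_k = \log(S_{t^n_k}/s)$ is a martingale-difference array with conditional variances summing to $\nu^2 t$ and a Lindeberg condition, apply the martingale functional CLT to get $X^n \Rightarrow \nu W - \nu^2\cdot/2$ in $C[0,1]$, and then push this through the continuous map $x\mapsto s e^{x}$ and control the interpolation error between the grid and continuous time uniformly.

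The main obstacle is step (3) combined with the interpolation in step (2): one must engineer the per-step law so that the \emph{exact} martingale property \eqref{7.1} holds for \emph{all} $t$, not just on the grid, while simultaneously the increments have the right asymptotic variance $\nu^2/n$ — these two requirements pull against each other because the exact martingale condition $\mathbb E[e^{\sigma\Delta_k W}]=e^{-\mu/n}$ constrains the Laplace transform of the tilted increment, whereas the CLT needs control of its variance, and off-grid one cannot freely choose the law without breaking the martingale identity. I expect to resolve this by defining $S$ off the grid as the $\tilde{\mathbb Q}_n$-martingale $S_t := \mathbb E_{\tilde{\mathbb Q}_n}(S_{t^n_k}\mid \mathcal F_t^W)$ for $t\in(t^n_{k-1},t^n_k]$ under a carefully chosen conditional law, or more simply by keeping the geometric-Brownian dynamics of $S$ unchanged under $\tilde{\mathbb Q}_n$ between grid points (changing only the grid-point conditional laws), so that the sup-norm distance between the piecewise-constant grid skeleton and the full path is $O_{\tilde{\mathbb Q}_n}(n^{-1/2}\log n) \to 0$, after which tightness and convergence of finite-dimensional distributions finish the argument via the standard martingale invariance principle.
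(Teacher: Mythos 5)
Your high-level plan --- tilt the measure so that on a fine grid the log-price behaves like a volatility-$\nu$ random walk and then invoke a martingale invariance principle --- matches the spirit of the paper's argument for \eqref{7.2}, but the construction you sketch does not deliver the statement as written, and the obstacle you flag at the end is exactly the one you never resolve. First, a smaller but real problem: replacing each Gaussian increment by a three-point law on $\{-a_n,0,a_n\}$ produces a measure that is singular, not equivalent, to $\mathbb P^W$, so that variant must be discarded and only the absolutely continuous tilts survive. Second, and decisively, \eqref{7.1} is required \emph{exactly for every} $t\in[0,1]$, not just at grid points, and this exactness is what the lemma is used for downstream (in Proposition \ref{prop7.1} one needs $\mathbb E_{\mathbb Q_n}\left[S_{T_2}-S_{T_1}\mid\mathcal F^W_{(T_1-h)^{+}}\right]=0$ for \emph{arbitrary} $T_1<T_2$). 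Your two proposed fixes do not work: you cannot ``define $S$ off the grid as a conditional expectation,'' because $S_t=se^{\sigma W_t+\mu t}$ is the given price process and only the measure is at your disposal; and keeping the Brownian bridge between grid points unchanged makes $\mathbb E_{\tilde{\mathbb Q}_n}[S_t]$ for off-grid $t$ a Laplace-transform functional of the tilted increment that you have no remaining freedom to normalize, so you obtain \eqref{7.1} only up to $o(1)$ errors, which is not the claim.

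The paper sidesteps this by noticing that \eqref{7.1} is an \emph{unconditional} expectation, not a conditional martingale property at grid points, and can therefore be enforced by a purely deterministic normalization. It performs a single continuous-time Girsanov change with a bounded drift $\kappa^{(n)}$: on each interval $(i/n,(i+1)/n]$ the drift is of order $n$ times the \emph{truncated} previous increment of the new Brownian motion (the truncation $\Psi$ guaranteeing boundedness, hence applicability of Girsanov), so that $\log(S_t/s)+\ln g^{(n)}_t$ becomes an explicit functional of the new Brownian motion $\tilde W^{(n)}$. The deterministic function $g^{(n)}_t$, defined for all $t$ by linearly interpolating the exponent, is precisely the expectation of that functional, whence $\mathbb E_{\tilde{\mathbb Q}_n}[S_t]=s$ identically in $t$; the weak convergence then follows because the truncation is asymptotically inactive and $g^{(n)}\to 1$ uniformly. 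To salvage your route you would have to abandon the per-step conditional martingale requirement and build an analogous time-dependent deterministic normalization into your density.
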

\begin{proof}
The proof will be done in two steps.\\
\textbf{Step I:} In this step we prove (\ref{7.1}).
 Fix $n\in\mathbb N$. Set,
$\Psi(z):=-1\vee(z\wedge 1)$, $z\in\mathbb R$. Introduce the function $g^{(n)}:[0,1]\rightarrow\mathbb R$
\begin{eqnarray*}
&g^{(n)}_t:=\mathbb E_{\mathbb P^W}\left(\exp\left(\sigma W_{t}+(\nu-\sigma)
\sum_{i=1}^{[nt]-1} \Psi\left(W_{\frac{i}{n}}- W_{\frac{i-1}{n}}\right)+\right.\right.\nonumber\\
&\left.\left.(nt-[nt])(\nu-\sigma)\Psi\left(W_{\frac{[nt]}{n}}- W_{\frac{[nt]-1}{n}}\right)-\frac{\nu^2 t}{2}\right)\right)
\end{eqnarray*}
where $[\cdot]$ is the integer part of $\cdot$. Observe that $g^{(n)}$ is continuous.

Next, define the processes
$\{\tilde W^{(n)}_t\}_{t=0}^1$, $\{\kappa^{(n)}_t\}_{t=0}^1$ by the following recursive relations
\begin{eqnarray*}
&\tilde W^{(n)}_t:=W_t+\frac{\ln g^{(n)}_t}{\sigma}-\int_{0}^t \kappa^{(n)}_u du, \\
&\kappa^{(n)}_t:=\frac{1}{\sigma}
\left(n(\nu-\sigma)\sum_{i=0}^{n-1}
\Psi\left(\tilde W^{(n)}_{\frac{i}{n}}-\tilde W^{(n)}_{\frac{i-1}{n}}\right)
-\mu-\frac{\nu^2}{2}\right)\mathbb{I}_{\left(\frac{i}{n},\frac{i+1}{n}\right)}.
\end{eqnarray*}
We notice (by taking conditional expectation given $\mathcal F^W_{[nt]/n}$) that
\begin{eqnarray*}
&g^{(n)}_t=\exp\left(\sigma^2 (t-[nt]/n)/2-\nu^2 t/2\right)\times\label{7.3}\\
&\mathbb E_{\mathbb P^W}\left(\exp\left(\sigma W_{\frac{[nt]}{n}}+(\nu-\sigma)\nonumber
\sum_{i=1}^{[nt]-1} \Psi\left(W_{\frac{i}{n}}- W_{\frac{i-1}{n}}\right)+\right.\right.\nonumber\\
&\left.\left.(nt-[nt])(\nu-\sigma)\Psi\left(W_{\frac{[nt]}{n}}- W_{\frac{[nt]-1}{n}}\right)\right)\right)\nonumber
\end{eqnarray*}
and so (recall that $|\Psi|\leq 1$) $\ln g^{(n)}$
is differentiable for all $t\in [0,1]\setminus\{0,\frac{1}{n},\frac{2}{n},...,1\}$ and the derivative is bounded.
Furthermore,
$\{\kappa^{(n)}_t\}_{t=0}^1$ is a uniformly bounded process (again, $|\Psi|\leq 1$). Thus, from
the Girsanov theorem there exists a probability measure
$\tilde{\mathbb Q}_n\sim\mathbb P^W$ such that
$\{\tilde W^{(n)}_t\}_{t=0}^1$ is a Brownian motion under $\tilde{\mathbb Q}_n$.

Hence,
\begin{eqnarray*}
&\mathbb E_{\tilde {\mathbb Q}_n}[S_t]=s\mathbb E_{\tilde{\mathbb Q}_n}\left(\exp\left(\sigma \tilde W^{(n)}_{t}+(\nu-\sigma)
\sum_{i=1}^{[nt]-1} \Psi\left(\tilde W^{(n)}_{\frac{i}{n}}- \tilde W^{(n)}_{\frac{i-1}{n}}\right)+\right.\right.\\
&\left.\left.(nt-[nt])(\nu-\sigma)\Psi\left(\tilde W_{\frac{[nt]}{n}}- \tilde W_{\frac{[nt]-1}{n}}\right)-\ln g^{(n)}_t-\frac{\nu^2 t}{2}\right)\right)=s
\end{eqnarray*}
as required.\\
\textbf{Step II:}
In this step we prove (\ref{7.2}).
Introduce the stochastic process
$$X^{(n)}_t=\ln S_t+\ln  g^{(n)}_t-\ln s, \ t\in [0,1].$$
In order to prove (\ref{7.2})
it is sufficient to show that
\begin{equation}\label{7.2+}
\lim_{n\rightarrow\infty}\mathbb E_{\tilde{\mathbb Q}_n}\left(\sup_{0\leq t\leq 1}|X^{(n)}_t-(\nu \tilde W^{(n)}_t-\nu^2 t/2)|\right)=0
\end{equation}
and
\begin{equation}\label{7.2++}
\lim_{n\rightarrow\infty}|g^{(n)}_t-1|=0.
\end{equation}
Clearly, $|z-\Psi(z)|\leq |z|\mathbb{I}_{|z|>1}$.
Hence,
\begin{eqnarray}\label{7.new}
&\sup_{0\leq t\leq 1}|X^{(n)}_t-(\nu \tilde W^{(n)}_t-\nu^2 t/2)|\leq\\
&(\nu+\sigma) \sum_{k=1}^n |\tilde W^{(n)}_{\frac{k}{n}}- \tilde W^{(n)}_{\frac{k-1}{n}}|
\mathbb{I}_{|\tilde W^{(n)}_{\frac{k}{n}}- \tilde W^{(n)}_{\frac{k-1}{n}}|>1} +\nonumber\\
&2(\nu+\sigma)\sup_{0\leq t\leq 1} |\tilde W^{(n)}_{t}-\tilde W^{(n)}_{\frac{[nt]}{n}}|.\nonumber
\end{eqnarray}
Observe that under the probability measure $\tilde{\mathbb Q}_n$, $\tilde W^{(n)}_{\frac{k}{n}}- \tilde W^{(n)}_{\frac{k-1}{n}}\sim \frac{1}{\sqrt n} N(0,1)$.
Thus from (\ref{7.new}) we obtain
\begin{eqnarray*}
&\lim\sup_{n\rightarrow\infty}\mathbb E_{\tilde{\mathbb Q}_n}\left(\sup_{0\leq t\leq 1}|X^{(n)}_t-(\nu \tilde W^{(n)}_t-\nu^2 t/2)|\right)\leq\\
&(\nu+\sigma)\lim\sup_{n\rightarrow\infty} \left( \sqrt{n} \int_{\sqrt n}^{\infty} 2 z\frac{e^{-z^2/2}}{\sqrt{2\pi} }dz\right)=0
\end{eqnarray*}
 and (\ref{7.2+}) follows.

Finally, we prove (\ref{7.2++}).
From (\ref{7.1}) we have $g^{(n)}_t= \mathbb E_{\tilde{\mathbb Q}_n}[e^{X^{(n)}_t}]$. This together with the trivial equality
$\mathbb E_{\tilde {\mathbb Q}_n}[e^{\nu\tilde W^{(n)}_t-\nu^2 t/2}]=1$ yields
 \begin{eqnarray*}
 &\lim\sup_{n\rightarrow\infty}\sup_{0\leq t\leq 1}\left|g^{(n)}_t-1\right|\leq\\
 &\lim\sup_{n\rightarrow\infty}\mathbb E_{\tilde {\mathbb Q}_n}\left(\sup_{0\leq t\leq1} \left|e^{X^{(n)}_t}-e^{\nu \tilde W^{(n)}_t-\nu^2 t/2}\right|\right)=0
 \end{eqnarray*}
where the last equality follows from (\ref{7.2+}) and uniform integrability of
$\sup_{0\leq t\leq 1}e^{X^{(n)}_t}$, $n\in\mathbb N$.
We conclude that (\ref{7.2++}) holds true as required.
\end{proof}
Next, let $\Gamma$ be the set of all uniformly bounded processes $\alpha=\{\alpha_t\}_{t=0}^1$ which are progressively
measurable with respect to $\mathcal F^W$.
For any $\alpha\in\Gamma$ we introduce the corresponding stochastic exponential
$$S^{(\alpha)}_t=s\exp\left(\int_{0}^t\alpha_u dW_u-\frac{1}{2}\int_{0}^t\alpha^2_u du\right), \ \ t\in [0,1].$$
The following Proposition is a direct application of Lemma \ref{lem7.1} and will be the corner stone in the proof of Theorem \ref{thm7.0}.
\begin{prop}\label{prop7.1}
For any $\alpha\in\Gamma$ there exists a sequence of probability measures
$\mathbb Q_n\in\mathbb M^q$, $n\in\mathbb N$ (the set $\mathbb M^q$ is defined before Theorem \ref{thm7.0})
such that
$$(S,\mathbb Q_n)\Rightarrow (S^{(\alpha)},\mathbb P^W).$$
\end{prop}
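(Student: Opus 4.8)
The plan is to reduce the case of a general bounded progressively measurable integrand $\alpha$ to the constant-volatility situation handled by Lemma \ref{lem7.1}, using the same pasting-and-Girsanov device on each subinterval of a time grid. First I would fix $\alpha\in\Gamma$ and, for a mesh size $1/m$, replace $\alpha$ by its "frozen" left-point approximation $\alpha^{(m)}_t=\alpha_{(k-1)/m}$ for $t\in((k-1)/m,k/m]$, which is $\mathcal F^W$-adapted and piecewise constant with values that are $\mathcal F^W_{(k-1)/m}$-measurable random variables (rather than deterministic constants). One checks that $S^{(\alpha^{(m)})}\to S^{(\alpha)}$ in the sense of weak convergence on $C([0,1];\mathbb R)$ as $m\to\infty$ — this is a standard stability estimate for stochastic exponentials with uniformly bounded integrands — so by a diagonal argument it suffices to prove the Proposition when $\alpha$ is piecewise constant on a fixed grid with bounded, grid-adapted values.

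For such a piecewise constant $\alpha$ I would build $\mathbb Q_n$ recursively across the grid intervals. On each interval $((k-1)/m,k/m]$, conditionally on $\mathcal F^W_{(k-1)/m}$, the value of $\alpha$ is a fixed (random) number $\nu$, and I would run the construction of Lemma \ref{lem7.1} on that interval, but now with the delay constraint built in: the drift correction $\kappa^{(n)}$ is chosen piecewise constant on the fine grid of mesh $1/n$ and $\mathcal F^W$-adapted, exactly as in the Lemma, and in addition one inserts enough "neutral" fine steps near each grid point $k/m$ so that the martingale requirement $\mathbb E_{\mathbb Q_n}(S_{T_2}-S_{T_1}\mid\mathcal F^W_{(T_1-h)^+})=0$ can be met: because the delay $h>0$ is fixed while the fine mesh $1/n\to 0$, the information available at time $(T_1-h)^+$ lags the current increment by order $nh\to\infty$ fine steps, and the $\Psi$-truncated increments used in Lemma \ref{lem7.1} are (conditionally on the delayed filtration) mean-zero, which is precisely what forces the conditional expectation of $S_{T_2}-S_{T_1}$ to vanish. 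The normalizing factor $g^{(n)}$ from the Lemma guarantees $\mathbb E_{\mathbb Q_n}[S_t\mid\mathcal F^W_{(k-1)/m}]=S_{(k-1)/m}$, i.e. $S$ is a $\mathbb Q_n$-martingale in the grid filtration, and pasting these interval constructions yields a global measure $\mathbb Q_n\sim\mathbb P^W$ with bounded density (product of boundedly many bounded Girsanov densities), hence $\mathbb Q_n\in\mathbb M^q$.

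The convergence $(S,\mathbb Q_n)\Rightarrow(S^{(\alpha)},\mathbb P^W)$ then follows by applying the Step II estimate of Lemma \ref{lem7.1} on each of the finitely many grid intervals: conditionally on $\mathcal F^W_{(k-1)/m}$ the process on $((k-1)/m,k/m]$ converges, under $\mathbb Q_n$, to a geometric Brownian motion with the local volatility $\alpha_{(k-1)/m}$ started at its current value, uniformly in the conditioning, and concatenating these weak limits reconstructs the law of $S^{(\alpha^{(m)})}$; combined with $S^{(\alpha^{(m)})}\Rightarrow S^{(\alpha)}$ and a diagonal choice $n=n(m)\to\infty$ this gives the claim. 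I expect the main obstacle to be the bookkeeping in the recursive construction: one must simultaneously (i) keep the Girsanov drift $\kappa^{(n)}$ bounded and adapted, (ii) maintain the exact martingale normalization via $g^{(n)}$ across interval boundaries where the conditioning information changes, and (iii) verify the delayed-martingale identity for \emph{all} pairs $T_2\ge T_1$ — not just grid points — which requires checking that within each fine step the increment of $S$ is conditionally mean-zero given the $h$-delayed filtration, exploiting that $nh\to\infty$. Everything else (the truncation tail estimate, uniform integrability of $\sup_t e^{X^{(n)}_t}$, stability of stochastic exponentials) is routine and quotable from the Lemma.
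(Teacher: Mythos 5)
Your overall architecture coincides with the paper's: reduce by a density argument to piecewise-constant, grid-adapted volatilities, apply Lemma \ref{lem7.1} conditionally on each grid interval, invoke Girsanov with a uniformly bounded drift to get equivalence and the $L^q$ integrability of the density (note in passing: that density is not \emph{bounded}, as you assert, only in every $L^q$), and concatenate the interval-wise weak limits. So the construction itself is the right one.

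The genuine gap is in your verification that $\mathbb Q_n\in\mathbb M^q$, i.e.\ of the delayed-martingale identity $\mathbb E_{\mathbb Q_n}(S_{T_2}-S_{T_1}\,|\,\mathcal F^W_{(T_1-h)^+})=0$. You attribute it to fine-scale features: inserting ``neutral'' fine steps near the grid points, the fact that $nh\to\infty$, and a claim that the $\Psi$-truncated increments are conditionally mean-zero given the $h$-delayed filtration. None of this is available, and the last claim is false: inside a coarse interval, $S$ is \emph{not} a $\mathbb Q_n$-martingale in the running filtration, because the normalizer $g^{(n)}$ of Lemma \ref{lem7.1} is deterministic and only corrects the \emph{unconditional} expectation; the increment of $S$ over a fine step has no reason to be conditionally mean-zero given $\mathcal F^W_u$ for intermediate $u$, delayed or not. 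What actually makes the constraint hold --- and what your write-up omits --- is the choice of the coarse partition with mesh strictly less than $h$. Then for any $T_1<T_2$, letting $t_k$ be the last grid point with $t_k\le T_1$, one has $(T_1-h)^+\le t_k$, and the interval-start identity $\mathbb E_{\mathbb Q_n}[S_t\,|\,\mathcal F^W_{t_i}]=S_{t_i}$ for all $t\in[t_i,t_{i+1}]$ (the conditional transport of (\ref{7.1}), which crucially holds for \emph{every} $t$ in the interval, not just the endpoint) yields $\mathbb E_{\mathbb Q_n}[S_{T_2}\,|\,\mathcal F^W_{t_k}]=\mathbb E_{\mathbb Q_n}[S_{T_1}\,|\,\mathcal F^W_{t_k}]=S_{t_k}$ by towering through the intermediate grid points; conditioning further down to $\mathcal F^W_{(T_1-h)^+}\subseteq\mathcal F^W_{t_k}$ finishes the argument. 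Your ``neutral fine steps'' device belongs to the discrete-delay binomial construction of Section 3, not to the fixed-$h$ Black--Scholes setting; as written, item (iii) of your list of obstacles would not go through.
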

\begin{proof}
Let $\Gamma_c\subset\Gamma$ be the set of all processes $\{\alpha_t\}_{t=0}^1$
of the form
\begin{equation}\label{4.102}
\alpha_t = \sum_{j=0}^{J-1} \rho_j( S^{(\alpha)}_{t_1},\dots, \ S^{(\alpha)}_{t_j}) 1_{(t_j,t_{j+1}]}
  \end{equation}
for some times $0=t_0<t_1<\dots<t_J=1$, $\epsilon>0$, and continuous bounded
functions $\rho_j:\mathbb R^{j} \rightarrow [\epsilon,\infty)$. Observe that $\rho_0$ is a constant.
Without loss of generality we assume that
the mesh of the partition is less than the constant delay $h>0$, i.e.
$t_i<t_{i-1}+h$ for all $i=1,...,J$.

By applying Levy's theorem in
a similar way to Lemma 4.6 in \cite{BD} we arrive that $M= S^{(\alpha)}$ is the unique (in law) martingale with initial value $M_0=s$ which satisfies
\begin{equation}\label{4.103}
\left\{\sum_{j=0}^{J-1} \int_{t_j\wedge t }^{t_{j+1}\wedge t }\frac{dM_u}{\rho_j(M_{t_1},...,M_{t_j})M_u}\right\}_{t=0}^1
 \ \ \mbox{is} \ \mbox{a} \ \mbox{standard} \ \mbox{Brownian} \ \mbox{motion}.
\end{equation}
Standard density arguments (see Lemma 3.4 in \cite{BDP} for the case $d=1$) imply that the set
of distributions $\left(\{S^{(\alpha)}\}_{t=0}^1,\mathbb P^W\right)_{\alpha\in\Gamma_c}\subset
\left(\{S^{(\alpha)}\}_{t=0}^1,\mathbb P^W\right)_{\alpha\in\Gamma}$ is dense.
Thus, it sufficient to prove the Proposition for $\alpha\in\Gamma_c$.

Fix $i=0,...,J-1$. Consider the Brownian motion
$W^{(i)}_t:=W_{t_i+t}-W_{t_i}$, $t\geq 0$.
Since $W^{(i)}$ is independent of $\mathcal F^W_{t_i}$, we can apply Lemma \ref{lem7.1}
for the Brownian motion $W^{(i)}$ and the volatility $\nu:=\rho_i( S^{(\alpha)}_{t_1},\dots, S^{(\alpha)}_{t_i})$.
Only we restrict the corresponding
measures $\tilde{\mathbb Q}_n$ to the interval $[0,t_{i+1}-t_i]$.
By following this procedure for all $i=0,1...,J-1$ (we shift the interval $[0,t_{i+1}-t_i]$ to the interval $[t_i,t_{i+1}]$)
we obtain a sequence of probability measures
${\mathbb Q}_n$ which satisfy the following properties (analogous to (\ref{7.1})--(\ref{7.2}))
\begin{equation}\label{7.8}
\mathbb E_{{\mathbb Q}_n}[S_t|\mathcal F^W_{t_i}]=S_{t_i}, \ \forall  t\in [t_i,t_{i+1}], \ \ i=0,...,J-1
\end{equation}
and
\begin{equation}\label{7.9}
\left(\left\{\frac{S_t}{S_{t_i}}\right\}_{t=t_i}^{t_{i+1}},{\mathbb  Q}_n\right)\Rightarrow \left(\left\{\frac{S^{(\alpha)}_t}{S^{(\alpha)}_{t_i}}\right\}_{t=t_i}^{t_{i+1}},\mathbb P^W\right), \ \ i=0,1,...,J-1.
\end{equation}
Clearly, (\ref{7.9}) implies $(S,\mathbb Q_n)\Rightarrow (S^{(\alpha)},\mathbb P^W)$.

It remains to argue that
$\mathbb Q_n\in\mathbb M^q$ for all $n\in\mathbb N$ .
Fix $n$.
Observe that there exists a drift process
$\{\lambda^{(n)}_t\}_{t=0}^1$ such that
$\{W_t-\int_{0}^t\lambda^{(n)}_u du\}_{t=0}^1$ is a Brownian motion under $\mathbb Q_n$.
Moreover, for any $i=0,...,J-1$ the drift $\lambda^{(n)}_{|(t_i,t_{i+1}]}$ can be computed
by applying Lemma \ref{lem7.1}
for the Brownian motion $W^{(i)}_t:=W_{t_i+t}-W_{t_i}$, $t\geq 0$
and the volatility $\nu:=\rho_i( S^{(\alpha)}_{t_1},\dots, S^{(\alpha)}_{t_i})$.
Since the functions $\rho_i$, $i=0,1,...,J$ are uniformly bounded we conclude that the drift process
$\{\lambda^{(n)}_t\}_{t=0}^1$ is uniformly bounded and so
$\frac{d\mathbb Q_n}{d\mathbb P^W}{|\mathcal F^W_1}
\in L^q(\mathcal F^W_1,\mathbb P^W)$, recall that $q=\frac{p}{p-1}\in [1,\infty)$.

Finally, let $0\leq T_1<T_2\leq 1$. Set,
$k:=\max\{i: t_i\leq T_1\}$. From (\ref{7.8}) it follows that
$$\mathbb E_{{\mathbb Q}_n}[S_{T_2}|\mathcal F^W_{t_k}]=\mathbb E_{{\mathbb Q}_n}[S_{T_1}|\mathcal F^W_{t_k}]=S_{t_k}.$$
Since $t_k> T_1-h$ (recall that $t_i-t_{i-1}<h$ for all $i$)
we conclude that
$$\mathbb E_{\mathbb Q}\left(S_{T_2}-S_{T_1}|\mathcal F^W_{(T_1-h)^{+}}\right)=0$$
and the proof is completed.
\end{proof}
Now, we are ready to prove Theorem \ref{thm7.0}.
\begin{proof}
Proof of Theorem \ref{thm7.0}.\\
First, if $\hat f<\infty$ then from concavity we obtain
$$\hat f(s)+\partial_{+}\hat{f}(s)(S_1-s)\geq\hat f(S_1)\geq f(S_1) \ \mbox{a.s.}$$
In other words the trivial strategy with initial capital $\hat f(s)$ and
$\gamma\equiv \partial_{+}\hat{f}(s)$
is a super--hedge.

Thus, it in order to complete the proof it remains to establish that
$V(X)\geq \hat f(s)$.
To that end,
recall the set $\Gamma$ which is defined before Proposition \ref{prop7.1}.
From the Fatou lemma, the duality given by (\ref{tei}) and Proposition \ref{prop7.1} we obtain
\begin{equation}
V(X)\geq\sup_{\alpha\in\Gamma} \mathbb E_{\mathbb P^W}[f(S^{(\alpha)}_1].
\end{equation}
Moreover,
from Lemmas 3.2--3.3 in \cite{N}.
we have
$$\sup_{\alpha\in\Gamma} \mathbb E_{\mathbb P^W}[f(S^{(\alpha)}_1]\geq\hat f(s).$$
Thus we conclude that
$V(X)\geq\hat f(s)$.
\end{proof}
\begin{rem}
Roughly speaking, Proposition \ref{prop7.1}
says that any local volatility model is a cluster point of the probability measures
in $\mathbb M^q$ (the dual objects).
In order that the super--replication price will be "reasonable"
we need to have a uniform bound on the local volatility.
In the next section we consider a continuous time limit of binomial models.
We prove that if we scale
the delay in the "right" way, then the local volatility models which appear in the
limit are uniformly bounded.
\end{rem}
\section{Scaling limit of binomial models with vanishing delay}
\subsection{The main result}
Let $\bar\Omega={\{-1,1\}}^{\mathbb{N}}$ be the space of infinite sequences
$\omega=(\omega_1,\omega_2,...)$;
$\omega_i\in\{-1,1\}$ with the product probability
$\mathbb{P}=\{\frac{1}{2},\frac{1}{2}\}^{\mathbb{N}}$.
Define the canonical sequence of independent and identically distributed (i.i.d.) random variables $\xi_1,\xi_2,...$ by
$\xi_i(\omega)=\omega_i$, ${i}\in\mathbb{N}$,
and consider the natural filtration
$\mathcal{F}_k=\sigma{\{\xi_1,...,\xi_k}\}$, $k\geq 1$
and let $\mathcal{F}_0$ be trivial.

Next, we introduce a sequence of binomial models
with volatility $\sigma>0$. For any $n$ consider the $n$--step
binomial model of a financial market which is active at times
$0,1/n,2/n,\ldots,1$. We assume that the market consists of
a safe asset$\equiv 1$ used as
numeraire and of a stock. The stock price at time
$k/n$ is given by
\begin{equation*}
S^{(n)}_k=se^{\sigma\sqrt\frac{1}{n}\sum_{i=1}^{k} \xi_i},
 \ \ \ k=0,1,...,n,
\end{equation*}
Similarly to \cite{IM} we fix a natural number $H\in\mathbb N$ and consider a situation where there is a delay of $H$
trading times.
Thus, we consider a sequence of binomial models with vanishing delay
$\frac{H}{n}$, $n\in\mathbb N$.

In the $n$--step binomial model a self--financing
portfolio $\pi$ with an initial capital $x$
is a pair $\pi=(x,\{\gamma_k\}_{k=0}^{n-1})$ where
for any $k$,
$\gamma_k$ is a $\mathcal{F}_{(k-H)^{+}}$--measurable
random variable.
The corresponding portfolio value at the maturity date equals to
$$Y^\pi_1=x+\sum_{i=0}^{n-1} \gamma_i(S^{(n)}_{i+1}-S^{(n)}_{i}).$$
We are interested in super--replication of European contingent claims.
Formally, given
a random variable $F_n$ (which represents the payoff of the claim) which is $\mathcal F_n$ measurable,
the super--replication price is given by
\begin{equation*}
V_n:=V_n(F_n)=\inf\left\{x:  \exists \pi=(x,\gamma) \
\mbox{such}  \ \mbox{that}\  Y^\pi_1\geq F_n,\
 \ \mathbb{P}\mbox{-a.s.}
\right\}.
\end{equation*}
Again, by applying Theorem 4.4 in \cite{CKT} (for the relatively simple case of finite probability space and discrete trading) we get
\begin{equation}\label{dual}
V_n=\sup_{\mathbb Q\in\mathcal Q^H_n}\mathbb E_{\mathbb Q}[F_n]
\end{equation}
where $\mathcal Q^H_n$ is
the set of all probability measures $\mathbb Q\sim\mathbb P$ on $(\bar\Omega,\mathcal F_n)$ for which
\begin{equation}\label{3.1}
\mathbb E_{\mathbb Q}(S^{(n)}_{k+1}-S^{(n)}_k|\mathcal F_{(k-H)^{+}})=0, \ \ \forall k=0,1...,n-1.
\end{equation}

The main result of this paper is the identification of the limit (as time step goes to zero)
of the super--replication prices as a $G$--expectation in the sense of Peng \cite{P1}.
Formally, let $\Omega=C([0,1], \mathbb R)$
be the space of
continuous paths equipped with the topology of uniform convergence
and the Borel $\sigma$--field $\mathbb F=\mathcal B(\Omega$). We denote by $\mathbb B=\mathbb B_t$, $t\geq 0$
the canonical process $\mathbb B_t(\omega)=\omega_t$.
Introduce the following set of probability measures on $(\Omega,\mathbb F)$
$$\mathcal Q^H:=\{\mathbb Q: \ \mathbb B \ \mbox{is} \ \mbox{a} \ \mathbb Q\mbox{-martingale}
\ \mbox{with} \ \mathbb B_0=0,\ d\langle\mathbb B\rangle/dt\leq \sigma^2(H+1) \ \mathbb Q\otimes dt\mbox{-a.s.}\}
$$
We assume the following.
\begin{asm}\label{asm4.1}
Let
$F:\Omega\rightarrow\mathbb{R}_{+}$
be a continuous map such that there exists
a constants $C,p>0$ for which
\begin{equation*}
\begin{split}
F(\omega)\leq C(1+\|\omega\|_\infty^p), \
\ \ \forall{\omega}\in \Omega.
\end{split}
\end{equation*}
For any $n\in\mathbb{N}$, let
$\mathcal{W}_n:\mathbb{R}^{n+1}\rightarrow\Omega$
be the linear interpolation operator given by
$$
\mathcal{W}_{n}(y)(t):=
\left(\left[{nt}\right]+1-{nt}\right)y_{\left[{nt}\right]}+
\left(nt-\left[nt\right]\right)y_{\left[nt\right]+1},
 \ \ \forall{t}\in[0,1]
$$
where $y=(y_0,y_1,...,y_n)\in\mathbb{R}^{n+1}$ and $[\cdot]$ denotes the integer part of $\cdot$.
In the $n$--step binomial model the payoff of the European contingent
claim is given by
\begin{equation*}
F_n:=F\left(\mathcal{W}_n(S^{(n)})\right)
\end{equation*}
where, by definition we consider, $\mathcal{W}_n(S^{(n)})$ as a random
element with values in $\Omega$.
\end{asm}
Next, we formulate our main result.
\begin{thm}\label{thm4.1}
Assume Assumption \ref{asm4.1}. Then the limit of the super--replication prices is given by
\begin{equation*}
\lim_{n\rightarrow\infty} V_n=\sup_{\mathbb Q\in\mathcal Q^H}\mathbb E_{\mathbb Q}[F(\mathbb S)]
\end{equation*}
where $\mathbb S$ is the exponential martingale
$$\mathbb S_t=s\exp(\mathbb B_t-\langle\mathbb B_t\rangle/2), \ \ t\in [0,1].$$
\end{thm}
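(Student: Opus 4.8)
The plan is to prove the two inequalities $\limsup_n V_n \le \sup_{\mathbb Q\in\mathcal Q^H}\mathbb E_{\mathbb Q}[F(\mathbb S)]$ and $\liminf_n V_n \ge \sup_{\mathbb Q\in\mathcal Q^H}\mathbb E_{\mathbb Q}[F(\mathbb S)]$ separately, using the dual representations \eqref{dual}–\eqref{3.1} on the discrete side and the definition of $\mathcal Q^H$ on the continuous side. For the upper bound (tightness/limit direction), I would take, for each $n$, a near-optimal $\mathbb Q_n\in\mathcal Q^H_n$ and consider the law of the interpolated process $\mathcal W_n(S^{(n)})$ under $\mathbb Q_n$. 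The first task is to show these laws are tight on $\Omega=C([0,1],\mathbb R)$; this should follow from the martingale constraint \eqref{3.1}, which forces the conditional increments of $S^{(n)}$ over blocks of $H+1$ steps to be centered, together with a uniform control on conditional second moments coming from the fact that $S^{(n)}$ lives on a fixed binomial tree (so increments are bounded by a deterministic factor and $d\langle\mathcal W_n(S^{(n)})\rangle$ is of order $1/n$ times a bounded quantity). Passing to a subsequential weak limit $\mathbb Q$, I would verify that $\mathbb B$ is a $\mathbb Q$-martingale and that the quadratic variation density is bounded by $\sigma^2(H+1)$ — the factor $H+1$ is the crux and arises because the delay constraint only forces martingality over windows of length $H+1$, so over such a window the "effective" one-step variance can be inflated by as much as $H+1$ (each of the $H$ steps inside a window that the controller cannot yet see can be chosen adversarially, and the Lindeberg/CLT bookkeeping converts this into a volatility bound $\sigma\sqrt{H+1}$). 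Then $\mathbb Q\in\mathcal Q^H$, $\mathcal W_n(S^{(n)})\Rightarrow\mathbb S$ under $\mathbb Q$ after identifying the exponential-martingale structure, and the polynomial growth bound in Assumption \ref{asm4.1} together with uniform integrability (uniform $L^{p+\delta}$ bounds on $\|\mathcal W_n(S^{(n)})\|_\infty$ under $\mathbb Q_n$, again from the bounded-increment tree structure) gives $\limsup_n V_n = \limsup_n \mathbb E_{\mathbb Q_n}[F_n] \le \mathbb E_{\mathbb Q}[F(\mathbb S)] \le \sup_{\mathbb Q\in\mathcal Q^H}\mathbb E_{\mathbb Q}[F(\mathbb S)]$.

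For the lower bound I would do the reverse approximation: given $\mathbb Q\in\mathcal Q^H$, construct discrete measures $\mathbb Q_n\in\mathcal Q^H_n$ with $\mathcal W_n(S^{(n)})\Rightarrow\mathbb S$ under $\mathbb Q_n$. The natural approach, in the spirit of Proposition \ref{prop7.1}, is first to reduce to a dense class of $\mathbb Q$'s of "local-volatility" type — martingale laws of $s\exp(\int_0^t\alpha_u d\mathbb B_u-\frac12\int_0^t\alpha_u^2du)$ with $\alpha$ piecewise constant, adapted in a stepwise-functional way, and $0\le\alpha\le\sigma\sqrt{H+1}$ — using a density/Markov-chain approximation argument as referenced via \cite{BD,BDP}. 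For such a target, I would build $\mathbb Q_n$ block by block: on each block of $H+1$ consecutive trading times I prescribe the conditional law of the block increment so that (i) the block increment is a $\mathcal F_{(k-H)^+}$-conditional martingale increment, as required by \eqref{3.1}, and (ii) its conditional variance matches $(H+1)\alpha^2/n$ at the relevant time. Concretely one can put all the "spread" of the block onto its last step while keeping the earlier steps within-block essentially deterministic given the delayed filtration — this is exactly how a window of $H+1$ Bernoulli steps can simulate a single Gaussian increment of variance up to $\sigma^2(H+1)/n$ while respecting the delay. A Donsker/Lindeberg argument then yields $\mathcal W_n(S^{(n)})\Rightarrow\mathbb S$ under $\mathbb Q_n$, and Fatou's lemma plus the growth bound give $\liminf_n V_n \ge \liminf_n \mathbb E_{\mathbb Q_n}[F_n] \ge \mathbb E_{\mathbb Q}[F(\mathbb S)]$; taking the supremum over the dense class and then over all of $\mathcal Q^H$ finishes this direction.

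The main obstacle I expect is making the $H+1$ window bookkeeping rigorous on both sides simultaneously: on the upper-bound side, showing that the delay constraint \eqref{3.1} really does cap the limiting quadratic-variation density at $\sigma^2(H+1)$ and not something larger, which requires a careful conditional second-moment estimate over sliding windows (the increments inside a window are not conditionally centered individually, only the window sum is, so one must bound $\mathbb E_{\mathbb Q_n}[(S^{(n)}_{k+1}-S^{(n)}_k)^2\mid\mathcal F_{(k-H)^+}]$ and sum telescoping contributions without double counting); and on the lower-bound side, exhibiting explicit discrete measures that saturate this bound while remaining genuine probability measures equivalent to $\mathbb P$ on the tree (positivity of all branch probabilities). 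A secondary technical point is the uniform integrability needed to upgrade weak convergence of $\mathcal W_n(S^{(n)})$ to convergence of $\mathbb E[F_n]$ under the polynomial growth assumption; this should follow from uniform exponential moment bounds for $\sum_{i=1}^k\xi_i$-type sums under any $\mathbb Q_n$ with bounded conditional densities, but it must be stated carefully since the $\mathbb Q_n$ are not product measures. Once these window estimates are in place, the identification of the limit as a $G$-expectation with volatility interval $[0,\sigma\sqrt{H+1}]$ follows from the definition of $\mathcal Q^H$ and the standard characterization of $G$-expectation as $\sup$ over martingale laws with bounded quadratic variation density.
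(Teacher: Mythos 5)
Your high-level architecture coincides with the paper's: duality on both sides, tightness of the near-optimal $\mathbb Q_n$-laws and identification of subsequential limits for the upper bound, and for the lower bound a reduction (via randomization and density) to piecewise-constant functional local-volatility models with $\alpha\le\sigma\sqrt{H+1}$ followed by an explicit discrete construction. However, both of the steps on which the theorem actually turns are missing or wrong in your proposal. For the upper bound you correctly flag the cap $d\langle N\rangle/dt\le\sigma^2(H+1)$ as the obstacle but offer no mechanism; note also that your remark that ``the increments inside a window are not conditionally centered individually, only the window sum is'' has it backwards --- \eqref{3.1} centers \emph{every} single increment given the lagged $\sigma$-field $\mathcal F_{(k-H)^+}$, and it is precisely this per-step centering that the paper exploits: it splits $\frac{\sigma}{\sqrt n}\sum_i\xi_i$ into $H+1$ subsampled processes indexed by the residue of $i$ modulo $H+1$, each of which is (in its own subsampled filtration) a martingale with limiting quadratic variation $\sigma^2t/(H+1)$ plus a Lipschitz drift, and then obtains the cap from the elementary inequality $(\sum_{j=0}^{H}a_j)^2\le(H+1)\sum_{j=0}^{H}a_j^2$ applied to the quadratic variation of the sum. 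Without some such device the cross-variations between the $H+1$ pieces are uncontrolled and the bound does not follow from second-moment bookkeeping over sliding windows alone.

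The lower-bound construction you describe would not prove the theorem. First, \eqref{3.1} is a condition on each one-step increment, not on block sums, so prescribing only the conditional law of the block increment is insufficient. More seriously, ``putting all the spread of the block onto its last step'' cannot generate volatility above $\sigma$: on the binomial tree every step moves the log-price by exactly $\pm\sigma/\sqrt n$, so a single step carries conditional variance at most $\sigma^2/n$ no matter how you reweight it, and a window of $H+1$ steps with the other $H$ steps ``essentially deterministic'' carries at most order $\sigma^2/n$ --- summing over the $n/(H+1)$ windows you could never exceed total quadratic variation of order $\sigma^2$, let alone reach $\sigma^2(H+1)$ (your own figure ``variance up to $\sigma^2(H+1)/n$ per window'' integrates to $\sigma^2$, not $\sigma^2(H+1)$). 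Moreover a step that is nearly deterministic given $\mathcal F_{(k-H)^+}$ has conditional mean of order $S^{(n)}_k\sigma/\sqrt n\neq0$ and so violates \eqref{3.1}. The correct device, which is the heart of the paper's lower bound, is the opposite of concentration: one \emph{correlates} the steps of a window so that, with probability $1-O(n^{-1/2})$, each step repeats the direction of an initial conditionally symmetric step; the window increment is then $\pm\sigma(H+1)/\sqrt n$ with variance $\sigma^2(H+1)^2/n$, i.e.\ amplified by the factor $H+1$, while \eqref{3.1} survives because within any $H+1$ consecutive points there is a symmetric (``free'') point through which the tower property forces the lagged conditional mean to vanish. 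Interleaving such momentum windows with mean-reverting windows in a state-dependent proportion $A^{(n)}_j$ then tunes the limiting volatility to any prescribed $\rho_j\in[\epsilon,\sigma\sqrt{H+1}]$. As written, your construction would only establish the lower bound with $\sigma$ in place of $\sigma\sqrt{H+1}$, which misses the content of the theorem.
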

\subsection{Proof of the Upper Bound}
In this section we prove that
\begin{equation}\label{4.2}
\lim \sup_{n\rightarrow\infty} V_n\leq \sup_{\mathbb Q\in\mathcal Q^H}\mathbb E_{\mathbb Q}[F(\mathbb S)].
\end{equation}
Without loss of generality (by passing to a subsequence) we assume that
$\lim_ {n\rightarrow\infty} V_n$ exists (it may be $\infty$).
From the duality given by (\ref{dual}) it follows that for any $n\in\mathbb N$ there exists a probability measure
$\mathbb Q_n\in\mathcal Q^H_n$ such that
\begin{equation}\label{4.3}
V_n<\frac{1}{n}+ \mathbb E_{\mathbb Q_n}[F_n].
\end{equation}
For any $n\in\mathbb N$ introduce the martingale $\{M^{(n)}_k\}_{k=0}^n$
\begin{equation*}
M^{(n)}_k:=\mathbb E_{\mathbb Q_n}(S^{(n)}_{k}|\mathcal F_{(k-H)^{+}}), \ \ k=0,1,...,n
\end{equation*}
The fact that $M^{(n)}$ is a martingale follows from (\ref{3.1}).
Clearly, there exists a constant $C$ such that
\begin{equation}\label{4.4+}
|M^{(n)}_k- S^{(n)}_k|\leq \frac{C}{\sqrt n}  S^{(n)}_k \ \ \forall k\leq n.
\end{equation}
By applying Lemma 3.3 in \cite{BDP}
we obtain that $(\mathcal{W}_{n}(M^{(n)})|\mathbb Q_n)$ is tight on the space $\Omega=C([0,1], \mathbb R)$ and
\begin{equation}\label{4.5+}
\sup_{n\in\mathbb N}\mathbb E_{\mathbb Q_n}[\max_{0\leq k\leq n}M^{(n)}_k+|\ln M^{(n)}_k|]^{2m}<\infty, \ \ \forall m>0.\\
\end{equation}
 So there exists a subsequence (which, for ease of notation, we still
index by $n$) which converges in distribution to a continuous stochastic process
$M=\{M_t\}_{t=0}^1$.
The fact that $\{M^{(n)}_k\}_{k=0}^n$ is a martingale for all $n$ and
(\ref{4.5+}) implies (see the standard arguments after Lemma 3.3 in \cite{BDP}) that
$M$ is a strictly positive martingale.

Next, from Assumption \ref{asm4.1} and (\ref{4.3})--(\ref{4.5+})
we obtain
$$\lim_{n\rightarrow\infty}V_n\leq \mathbb E[F(M)].$$
Introduce the continuous local martingale
$N_t:=\int_{0}^t\frac{d M_u}{M_u}$, $t\in [0,1]$.
Clearly,
$$M_t=\exp(N_t-\langle N_t\rangle/2), \ \ t\in [0,1].$$

We conclude that in order to establish (\ref{4.2}), it remains to prove the following lemma.
\begin{lem}
The quadratic variation of the local martingale $N$ satisfies
\begin{equation}\label{4.6}
\frac{d\langle N\rangle}{dt}\leq \sigma^2(H+1) \ \ \forall t\in [0,1], \ \ \mbox{a.s.},
\end{equation}
which in particular implies that $N$ is a (true) martingale.
\end{lem}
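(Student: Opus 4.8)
The plan is to reduce (\ref{4.6}) to a bound on the conditional predictable quadratic variation of the discrete martingales $M^{(n)}$, exploiting the overlapping–block structure that the delay imposes.

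First I would record a logarithmic increment identity. For $k\ge H$ one has $(k-H)^{+}=k-H$, and since $S^{(n)}_{k-H}$ is $\mathcal F_{k-H}$–measurable,
\[
M^{(n)}_{k}=\mathbb E_{\mathbb Q_n}\!\big(S^{(n)}_{k}\mid\mathcal F_{k-H}\big)=S^{(n)}_{k-H}\,\theta^{(n)}_{k-H},\qquad
\theta^{(n)}_i:=\mathbb E_{\mathbb Q_n}\!\Big(e^{\frac{\sigma}{\sqrt n}(\xi_{i+1}+\cdots+\xi_{i+H})}\,\Big|\,\mathcal F_i\Big),
\]
so that, with $i=k-H$,
\[
\ln M^{(n)}_{i+H+1}-\ln M^{(n)}_{i+H}=\frac{\sigma}{\sqrt n}\,\xi_{i+1}+\big(\ln\theta^{(n)}_{i+1}-\ln\theta^{(n)}_{i}\big);
\]
for $k<H$ the constraint (\ref{3.1}) forces $M^{(n)}$ to be constant, so those increments vanish. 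Because $|\xi_{i+1}+\cdots+\xi_{i+H}|\le H$, a second–order Taylor expansion yields, uniformly in $i$ and in $\mathbb Q_n\in\mathcal Q^H_n$,
\[
\ln\theta^{(n)}_i=\frac{\sigma}{\sqrt n}\,A^{(n)}_i+\frac{\sigma^2}{2n}\,\Sigma^{(n)}_i+r^{(n)}_i,\qquad |r^{(n)}_i|\le C_{\sigma,H}\,n^{-3/2},
\]
where $A^{(n)}_i:=\mathbb E_{\mathbb Q_n}(\xi_{i+1}+\cdots+\xi_{i+H}\mid\mathcal F_i)$ and $\Sigma^{(n)}_i:=\mathrm{Var}_{\mathbb Q_n}(\xi_{i+1}+\cdots+\xi_{i+H}\mid\mathcal F_i)$, bounded by $H$ and $H^2$ respectively.

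The key estimate comes next. Substituting the expansion into the increment, using that $A^{(n)}_i,\Sigma^{(n)}_i$ are $\mathcal F_i$–measurable while $\xi_{i+1}$ is $\mathcal F_{i+1}$–measurable, and controlling the $\Sigma^{(n)}$– and $r^{(n)}$–contributions (by Cauchy–Schwarz for the cross terms, and since there are $O(n)$ of them they sum to $o(1)$), one gets
\[
\mathrm{Var}_{\mathbb Q_n}\!\big(\ln M^{(n)}_{i+H+1}-\ln M^{(n)}_{i+H}\mid\mathcal F_i\big)=\frac{\sigma^2}{n}\,\mathrm{Var}_{\mathbb Q_n}\!\Big(\mathbb E_{\mathbb Q_n}\big(\textstyle\sum_{l=i+1}^{i+H+1}\xi_l\mid\mathcal F_{i+1}\big)\,\Big|\,\mathcal F_i\Big)+\varepsilon^{(n)}_i,
\]
with $\sum_i|\varepsilon^{(n)}_i|\to0$. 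Writing $\Delta_{i+1}\xi_l:=\mathbb E_{\mathbb Q_n}(\xi_l\mid\mathcal F_{i+1})-\mathbb E_{\mathbb Q_n}(\xi_l\mid\mathcal F_i)$ for the Doob increment of $\xi_l$ at step $i+1$, the inner variance equals $\mathbb E_{\mathbb Q_n}\big[(\sum_{l=i+1}^{i+H+1}\Delta_{i+1}\xi_l)^2\mid\mathcal F_i\big]\le(H+1)\sum_{l=i+1}^{i+H+1}\mathbb E_{\mathbb Q_n}\big[(\Delta_{i+1}\xi_l)^2\mid\mathcal F_i\big]$ by Cauchy–Schwarz. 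Summing over $i\in\{i_1,\dots,i_2\}$, taking the conditional expectation given $\mathcal F_{i_1}$, and interchanging the order of summation, for each fixed $l$ at most $H+1$ consecutive Doob increments of the martingale $j\mapsto\mathbb E_{\mathbb Q_n}(\xi_l\mid\mathcal F_j)$ appear, all at steps $>i_1$, so their conditional second moments sum to $\mathbb E_{\mathbb Q_n}\big[(\xi_l-\mathbb E_{\mathbb Q_n}(\xi_l\mid\mathcal F_{i_1}))^2\mid\mathcal F_{i_1}\big]=\mathrm{Var}_{\mathbb Q_n}(\xi_l\mid\mathcal F_{i_1})\le1$, using $|\xi_l|=1$. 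Hence the conditional normalized predictable quadratic variation of $M^{(n)}$ accumulated over the steps $k=i_1+H,\dots,i_2+H$ is at most $\tfrac{\sigma^2}{n}(H+1)(i_2-i_1+H+1)+o(1)$.

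Finally, one passes to the limit. Since $M^{(n)}\Rightarrow M$ with vanishing jumps (by (\ref{4.4+}) and the moment bounds) and (\ref{4.5+}) provides uniform integrability, the functional martingale central limit theorem identifies $\langle N\rangle=\langle\ln M\rangle$ (recall $\ln M=N-\langle N\rangle/2$) with the limit of the normalized predictable quadratic variations of $M^{(n)}$. Taking $i_1=[ns]$, $i_2=[nt]$ above and letting $n\to\infty$ gives $\mathbb E\big[\langle N\rangle_t-\langle N\rangle_s\mid\mathcal F_s\big]\le\sigma^2(H+1)(t-s)$ for all $0\le s\le t\le1$; hence $t\mapsto\sigma^2(H+1)t-\langle N\rangle_t$ is a continuous finite–variation submartingale, so by Doob–Meyer (its martingale part, being a continuous finite–variation local martingale, vanishes) it is non‑decreasing — which is exactly (\ref{4.6}). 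In particular $\langle N\rangle_1<\infty$, so $N$ is a genuine martingale. I expect the main obstacle to lie in this last step: transferring the conditional — hence pathwise — bound on $\langle N\rangle$ through the weak convergence $M^{(n)}\Rightarrow M$ rather than obtaining only a bound in expectation, together with the bookkeeping that shows the Taylor remainders and covariance cross terms are genuinely summable to $o(1)$ uniformly over the admissible measures $\mathbb Q_n$.
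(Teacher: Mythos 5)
Your discrete computation is correct and rests on the same underlying mechanism as the paper's proof --- the delay constraint (\ref{3.1}) forces $\mathbb E_{\mathbb Q_n}(\xi_k\mid\mathcal F_{k-H-1})=O(n^{-1/2})$, and the factor $H+1$ comes from the Cauchy--Schwarz bound $(\sum_{i=1}^{H+1}a_i)^2\le(H+1)\sum_{i=1}^{H+1}a_i^2$ --- but you organize the argument quite differently. The paper splits $\frac{\sigma}{\sqrt n}\sum_{i\le [nt]}\xi_i$ into $H+1$ interleaved subsequences $j\mapsto\xi_{i(H+1)+j}$, each of which is (up to a Lipschitz drift $A^{(j)}$) a martingale with respect to the coarsened filtration $\mathcal F_{(i-1)(H+1)+j}$ with conditional increment variance exactly $\sigma^2/n+O(n^{-3/2})$ (here $\xi_k^2\equiv 1$ is used); the martingale CLT then identifies each limit as $\frac{\sigma}{\sqrt{H+1}}W$ with \emph{deterministic} quadratic variation $\sigma^2t/(H+1)$, and Kunita--Watanabe applied to the sum gives the pathwise bound (\ref{4.6}) directly. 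You instead bound the conditional variance of the increments of $\ln M^{(n)}$ via the Doob increments $\Delta_{i+1}\xi_l$ and telescope the orthogonal increments of each closed martingale $j\mapsto\mathbb E_{\mathbb Q_n}(\xi_l\mid\mathcal F_j)$ against $\mathrm{Var}(\xi_l\mid\mathcal F_{i_1})\le1$. This is arguably the more transparent derivation of the constant $\sigma^2(H+1)$, and it does not need the conditional variances to be asymptotically exact, only bounded above.

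The price of your route is the final passage to the limit, which you correctly flag as the main obstacle but leave unproved: your discrete estimate only controls $\mathbb E_{\mathbb Q_n}[\langle\ln M^{(n)}\rangle_{[nt]}-\langle\ln M^{(n)}\rangle_{[ns]}\mid\mathcal F_{[ns]}]$, not the predictable variation pathwise, so you must (i) establish joint convergence of $(\ln M^{(n)},\langle\ln M^{(n)}\rangle_{[n\cdot]})$ and identify the second component's limit with $\langle N\rangle$ (this is the stability of predictable characteristics, Ch.~IX of \cite{JS}, and needs the vanishing jumps from (\ref{4.4+}) and the uniform integrability from (\ref{4.5+})); and (ii) transfer the conditional bound by testing against bounded continuous functionals of the path up to time $s$ before invoking your Doob--Meyer argument to upgrade $\mathbb E[\langle N\rangle_t-\langle N\rangle_s\mid\mathcal F_s]\le\sigma^2(H+1)(t-s)$ to $d\langle N\rangle/dt\le\sigma^2(H+1)$ a.s. Both steps are standard and your compensator argument for (ii) is sound (a continuous finite--variation submartingale of class (D) is nondecreasing), so I regard this as a correct alternative proof with one step to be fleshed out rather than a gap in the idea; note that the paper sidesteps exactly this issue because each interleaved piece has a deterministic limiting quadratic variation, for which the martingale CLT already yields a pathwise identification.
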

\begin{proof}
The implication that $N$ is a martingale is clear form the Burkholder--Davis--Gundy inequality. Thus, let us prove
(\ref{4.6}).

Fix $n\in\mathbb N$.
From the Taylor expansion for $e^x$ we get that for any $k\geq H$,
$$ S^{(n)}_k- S^{(n)}_{k-1}= S^{(n)}_{k-H-1}\left(1+\frac{\sigma\xi_{k}}{\sqrt n}+O(1/n)\right)$$
where the term $O(1/n)$ is uniformly bounded by $c/n$ for some constant $c$. This together with (\ref{3.1}) gives
(recall that $\mathbb Q_n\in\mathcal Q^H_n$)
\begin{equation}\label{4.7}
\mathbb E_{\mathbb Q_n}(\xi_{k}|\mathcal F_{k-H-1})=O(1/\sqrt n), \ \ k=H,...,n.
\end{equation}
For any $j=0,...,H$ introduce the
stochastic processes
$A^{n,j}=\{A^{n,j}_k\}_{k=0}^n$ and $M^{n,j}=\{M^{n,j}_k\}_{k=0}^n$
by
\begin{eqnarray*}\label{4.8}
&A^{n,j}_k=\frac{\sigma}{\sqrt n}\sum_{i=1}^{[k/{(H+1)}]}\mathbb E_{\mathbb Q_n}\left(\xi_{i (H+1)+j}|\mathcal F_{(i-1) (H+1)+j}\right),\\
&M^{n,j}_k=\frac{\sigma}{\sqrt n}\left(\sum_{i=1}^{[k/{(H+1)}]}\xi_{i (H+1)+j}-
\mathbb E_{\mathbb Q_n}\left(\xi_{i (H+1)+j}|\mathcal F_{(i-1) (H+1)+j}\right)\right)
\end{eqnarray*}
where
as before $[\cdot]$ is the integer part of $\cdot$.

Fix $j$. Clearly, $M^{n,j}$ is a martingale with respect to the filtration generated by
$A^{n,j}$ and $M^{n,j}$. Notice that for any $i$,
$M^{n,j}_{i (H+1)}=M^{n,j}_{i (H+1)+1}=...=M^{n,j}_{i (H+1)+H}$. Moreover,
(\ref{4.7}) implies that
\begin{equation}\label{variation}
\mathbb E_{\mathbb Q_n}\left((M^{n,j}_{(i+1) (H+1)}-M^{n,j}_{i (H+1)})^2|\mathcal F_{i(H+1)+j}\right)=
\frac{\sigma^2}{n}+O(n^{-3/2}).
\end{equation}
Thus, from the Martingale Central Limit Theorem (Theorem 7.4.1 in \cite{EK})
it follows that
\begin{equation}\label{4.9}
\left(\{M^{n,j}_{[nt]}\}_{t=0}^1,\mathbb Q_n\right)\Rightarrow\frac{\sigma}{\sqrt{H+1}} W,
\end{equation}
where $W=\{W_t\}_{t=0}^1$ is a standard Brownian motion.
From (\ref{4.7}) we obtain that the sequence
$\left(\{A^{n,j}_{[nt]}\}_{t=0}^1,\mathbb Q_n\right)$, $n\in\mathbb N$ is tight (on the Skorokhod space of right continuous with left limit functions),
and any cluster point is a Lipschitz continuous process.
We conclude that there exists a subsequence (which, for ease of notation, we still
index by $n$) such that we have the convergence of the joint distributions
\begin{equation}\label{4.9+}
\left(\left(\{M^{n,j}_{[nt]}\}_{t=0}^1,\{A^{n,j}_{[nt]}\}_{t=0}^1\right),\mathbb Q_n\right)\Rightarrow (W^{(j)},A^{(j)})
\end{equation}
where $A^{(j)}=\{A^{(j)}_t\}_{t=0}^1$ is a Lipschitz continuous process and
$W^{(j)}=\{W^{(j)}_t\}_{t=0}^1$ has the same distribution as
$\frac{\sigma}{\sqrt{H+1}} W$. Next, from the fact that for all $n$,
$M^{n,j}$ is a martingale with respect to the filtration generated by
$A^{n,j}$ and $M^{n,j}$, and its predictable variation is uniformly bounded (follows from (\ref{variation})), we obtain that the limit process
$W^{(j)}$ is a martingale with respect to the natural filtration
generated by $W^{(j)}$ and $A^{(j)}$ (for details see Chapter 9 in \cite{JS}).
Thus, from (\ref{4.9}) it follows that
\begin{equation}\label{4.10}
\langle W^{(j)}+A^{(j)}\rangle _t=\langle W^{(j)}\rangle _t=\frac{\sigma^2 t}{H+1} \ \  \forall t\in [0,1], \ \ \mbox{a.s.}
\end{equation}
Now, we arrive to the final step of the proof.
Without loss of generality (by passing to a subsequence), we assume that the sequence
of the joint distributions
$$\left(\left(\{M^{n,0}_{[nt]}\}_{t=0}^1,...,\{M^{n,H}_{[nt]}\}_{t=0}^1,\{A^{n,0}_{[nt]}\}_{t=0}^1,....,\{A^{n,H}_{[nt]}\}_{t=0}^1\right),\mathbb Q_n\right)$$
converges.
Observe that
$$\frac{\sigma}{\sqrt n}\sum_{i=1}^{[nt]}\xi_i=O(n^{-1/2})+\sum_{j=0,1,...,H}\left(A^{n,j}_{[nt]}+M^{n,j}_{[nt]}\right).$$
This together with (\ref{4.9+}) and the weak convergence
$$\left(\left\{\frac{\sigma}{\sqrt n}\sum_{i=1}^{[nt]}\xi_i\right\}_{t=0}^1,\mathbb Q_n\right)\Rightarrow \{N_t-\langle N\rangle_t/2\}_{t=0}^1$$
gives that the distribution of $\sum_{j=0,1,...,H}\left(A^{(j)}+W^{(j)}\right)$
equals to the distribution of
 $N-\langle N\rangle/2$. Moreover, from the equality
$\langle N\rangle\equiv \langle N-\langle N\rangle/2\rangle$  we deduce that, in (\ref{4.6}) we can replace $N$ with
$\sum_{j=0,1,...,H}\left(A^{(j)}+W^{(j)}\right)$.

Finally, from (\ref{4.10}) and the simple
inequality
$$(\sum_{i=1}^{H+1} a_i)^2\leq (H+1)\sum_{i=1}^{H+1} a^2_i, \ \ a_1,...,a_{H+1}\in\mathbb R$$ we get that for any $T_1<T_2$
\begin{eqnarray*}
&\langle \sum_{j=0,1,...,H}\left(A^{(j)}+W^{(j)}\right)\rangle _{T_2}-\langle \sum_{j=0,1,...,H}\left(A^{(j)}+W^{(j)}\right)\rangle _{T_1}\leq\\
&(H+1)\sum_{j=0,1,...,H}\left(\langle A^{(j)}+W^{(j)}\rangle _{T_2}-\langle A^{(j)}+W^{(j)}\rangle _{T_1}\right)\leq \sigma^2 (H+1)(T_2-T_1)
\end{eqnarray*}
and (\ref{4.6}) follows.
\end{proof}

\subsection{Proof of the Lower Bound}
Recall from Section \ref{sec:7} the Brownian probability space
$(\Omega^W, \mathcal{F}^{W}, \mathbb P^{W})$
and the sets $\Gamma,\Gamma_c$. Introduce the sets
$$\Gamma^H:=\{\alpha\in\Gamma: \alpha\leq\sigma\sqrt{H+1} \ \ \mathbb P^W\otimes dt \  \mbox{a.s.}\}$$ and
$$\Gamma^H_c:=\{\alpha\in\Gamma_c: \alpha\leq\sigma\sqrt{H+1} \ \ \mathbb P^W\otimes dt \ \mbox{a.s.}\}.$$

By applying a randomization
technique similar to Lemma 7.2 in \cite{DS} we obtain
\begin{equation}\label{4.100}
\sup_{\mathbb Q\in\mathcal Q^H}\mathbb E_{\mathbb Q}[F(\mathbb S)]=\sup_{\alpha\in\Gamma^H}\mathbb E_{\mathbb P^W}[F(S^{(\alpha)})]
\end{equation}
where, recall $S^{(\alpha)}$ is given before Proposition \ref{prop7.1}.

Next, standard density arguments (see Lemma 3.4 in \cite{BDP} for the case $d=1$) imply that
\begin{equation}\label{4.101}
\sup_{\alpha\in\Gamma^H}\mathbb E_{\mathbb P^W}[F(S^{(\alpha)})]=\sup_{\alpha\in\Gamma^H_c}\mathbb E_{\mathbb P^W}[F(S^{(\alpha)})].
\end{equation}
From (\ref{dual}), Assumption \ref{asm4.1} and
(\ref{4.100})--(\ref{4.101}), it follows that in order to prove the lower bound, i.e. the inequality
$$\lim \inf_{n\rightarrow\infty} V_n\geq \sup_{\mathbb Q\in\mathcal Q^H}\mathbb E_{\mathbb Q}[F(\mathbb S)]$$
it remains to establish the following lemma.
\begin{lem}
For any $\alpha\in \Gamma^H_c$ there exists a sequence of probability measures
${\mathbb Q}_n\in\mathcal Q^H_n$ such that we have the weak convergence
$\left(\{S^{(n)}_{[nt]}\}_{t=0}^1,{\mathbb Q}_n\right)\Rightarrow \{ S^{(\alpha)}_t\}_{t=0}^1$.
\end{lem}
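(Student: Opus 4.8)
The plan is to construct, for a fixed $\alpha\in\Gamma^H_c$ of the simple form \eqref{4.102} with all $\rho_j$ bounded by $\sigma\sqrt{H+1}$, a measure $\mathbb Q_n$ on $(\bar\Omega,\mathcal F_n)$ under which the rescaled stock increments behave, on each block $(t_j,t_{j+1}]$, like a random walk with local volatility $\rho_j$. The key point is that the martingale constraint \eqref{3.1} only forces $\mathbb E_{\mathbb Q_n}(S^{(n)}_{k+1}-S^{(n)}_k|\mathcal F_{(k-H)^+})=0$, i.e.\ it ties the sign-bias of $\xi_{k+1}$ to the information available $H$ steps earlier. So I would group the indices $1,\dots,n$ into the $H+1$ residue classes modulo $H+1$, exactly as in the upper bound proof via the processes $A^{n,j}, M^{n,j}$. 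Within residue class $j$, the increments $\xi_{i(H+1)+j}$ form, under a suitably chosen $\mathbb Q_n$, a martingale-difference-type sequence whose conditional variance we are free to inflate up to a factor $H+1$ relative to the Bernoulli variance, while keeping the total quadratic variation of $\frac{\sigma}{\sqrt n}\sum\xi_i$ equal to $\langle N\rangle$ with $d\langle N\rangle/dt=\alpha^2$.

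Concretely, I would first reduce to the case where $\alpha$ is a genuine simple adapted functional as in \eqref{4.102} (using \eqref{4.101}), and on the $j$--th block $[t_j,t_{j+1}]$ pick the per--step target volatility to be the constant $\nu_j:=\rho_j(S^{(\alpha)}_{t_1},\dots,S^{(\alpha)}_{t_j})\in[\epsilon,\sigma\sqrt{H+1}]$. The aim is to define $\mathbb Q_n$ recursively in $k$: at step $k$ we specify the $\mathbb Q_n$--conditional law of $\xi_k$ given $\mathcal F_{k-1}$, but the constraint \eqref{3.1} only dictates $\mathbb E_{\mathbb Q_n}(\xi_k|\mathcal F_{k-H-1})$. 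I would exploit this slack by making $\xi_k$ take value $+1$ or $-1$ with probabilities depending on $\mathcal F_{k-1}$ in such a way that (a) the $\mathcal F_{k-H-1}$--conditional mean matches the $O(1/\sqrt n)$ drift coming from the Taylor expansion of $S^{(n)}$ (mirroring \eqref{4.7}), and (b) the $\mathcal F_{k-H-1}$--conditional variance of the block sum $\sum_{r=0}^{H}\xi_{k+r}$ over a window is close to $(H+1)\nu_{j}^2/\sigma^2$ when $t_j\le (k-1)/n<t_{j+1}$; i.e.\ we concentrate the ``volatility budget'' of a whole length--$(H+1)$ window. The cleanest realization is to use a hidden randomization: on each residue class $j'$, let the sign $\xi_{i(H+1)+j'}$ be $\pm1$ with a probability tilted by an amount of order $1/\sqrt n$ so as to produce, in the diffusive limit, a Brownian motion of volatility $\nu_j/\sqrt{H+1}$ in that residue class; summing the $H+1$ independent--in--the--limit pieces (as in the decomposition $\frac{\sigma}{\sqrt n}\sum\xi_i=O(n^{-1/2})+\sum_{j'}(A^{n,j'}+M^{n,j'})$) then yields total variance $\nu_j^2\,dt$, matching $\langle N\rangle$ for $N$ with $d\langle N\rangle/dt=\alpha^2$.

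Once $\mathbb Q_n$ is defined, the convergence $(\{S^{(n)}_{[nt]}\}_{t=0}^1,\mathbb Q_n)\Rightarrow\{S^{(\alpha)}_t\}_{t=0}^1$ follows from the martingale central limit theorem applied residue--class by residue--class (Theorem 7.4.1 in \cite{EK}), exactly the tool used in the upper bound, together with the Levy--type characterization \eqref{4.103} of $S^{(\alpha)}$ as the unique martingale for which $\sum_j\int_{t_j\wedge\cdot}^{t_{j+1}\wedge\cdot}\frac{dM_u}{\rho_j M_u}$ is a Brownian motion. The $L^p$--type bounds \eqref{4.4+}--\eqref{4.5+} transfer verbatim since the constructed drifts are $O(1/\sqrt n)$, giving tightness of $\mathcal W_n(S^{(n)})$ and uniform integrability for passing $F$ through the limit under Assumption \ref{asm4.1}. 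Membership $\mathbb Q_n\in\mathcal Q^H_n$ is built in by construction, since at each step we chose the $\mathcal F_{(k-H)^+}$--conditional mean of $S^{(n)}_{k+1}-S^{(n)}_k$ to vanish up to the exact correction absorbed by the Taylor remainder.

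The main obstacle I anticipate is getting the bookkeeping of the two conditioning levels right: the budget we are allowed to spend is governed by $\mathcal F_{k-H-1}$, while the increments we actually sample live on $\mathcal F_{k}$, and we must make the length--$(H+1)$ blocks ``line up'' with the deterministic partition $0=t_0<\dots<t_J=1$ so that the volatility $\nu_j$ assigned on $(t_j,t_{j+1}]$ is genuinely $\mathcal F_{(k-H)^+}$--measurable for every $k$ in that block (this forces, as in the previous section, a mild adjustment so that each $t_{j+1}-t_j$ exceeds $H/n$ for $n$ large). The other delicate point is verifying \eqref{variation}--type estimates for the constructed measure rather than for an abstract one, i.e.\ checking that the conditional second moments of the block increments really are $(H+1)\nu_j^2/(\sigma^2 n)+o(1/n)$ uniformly, so that the martingale CLT applies and the limiting quadratic variation is exactly $\alpha^2\,dt$ and not merely bounded by $\sigma^2(H+1)\,dt$.
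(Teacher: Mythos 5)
There is a genuine gap in the mechanism you propose for realizing the target volatility. You suggest defining $\mathbb Q_n$ residue class by residue class, with the sign of each $\xi_{i(H+1)+j'}$ ``tilted by an amount of order $1/\sqrt n$'' so that each class contributes a Brownian motion of volatility $\nu_j/\sqrt{H+1}$ and the classes sum to total variance $\nu_j^2\,dt$. This cannot work: $\xi_k$ is a $\pm1$ random variable, so its conditional variance is $1-(\mathbb E_{\mathbb Q_n}(\xi_k|\mathcal F_{k-1}))^2$, which is $1-O(1/n)$ under an $O(1/\sqrt n)$ tilt and, more importantly, can never exceed $1$. Small tilts change the drift but not the diffusion coefficient, so each residue-class martingale part converges to a Brownian motion of volatility exactly $\sigma/\sqrt{H+1}$ (this is precisely the computation \eqref{variation}--\eqref{4.9} in the upper bound), the drift parts converge to finite-variation processes, and the limiting quadratic variation of $\frac{\sigma}{\sqrt n}\sum\xi_i$ is forced to be $\sigma^2 t$ --- not $\int_0^t\alpha_u^2\,du$. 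Your item (b), asking for block-sum conditional variance $(H+1)\nu_j^2/\sigma^2$ (up to $(H+1)^2$ when $\nu_j=\sigma\sqrt{H+1}$), correctly identifies what is needed, but it is incompatible with independent-in-the-limit residue classes of $\pm1$ increments, whose block-sum variance is at most $H+1$. The factor $H+1$ can only be gained through near-perfect \emph{serial correlation} of the $H+1$ increments within a window; note that in the upper bound the inequality $(\sum_{i=1}^{H+1}a_i)^2\le(H+1)\sum a_i^2$ is saturated exactly when all $a_i$ are equal.

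The paper's construction supplies exactly this missing ingredient: on a ``momentum'' window the conditional law of $\xi_{k+1}$ given $\mathcal F_k$ puts mass $1-O(n^{-1/2})$ (a large tilt, not $\tfrac12+O(n^{-1/2})$) on repeating the previous sign, so the stock moves by $\pm\sigma(H+1)n^{-1/2}$ over $H+1$ steps and the per-step variance is amplified by $H+1$; on an ``alternating'' window the increments cancel and the variance is essentially zero; and the proportion $A^{(n)}_j$ of momentum windows is tuned so that the average is exactly $\rho_j^2$. The delay constraint \eqref{3.1} is respected because each window begins with a genuine martingale point and the biased steps are only required to be centered conditionally on information $H$ steps back. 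Your bookkeeping remarks about the two conditioning levels and the alignment of length-$(H+1)$ blocks with the partition $(t_j)$ are on target, but without replacing the small-tilt randomization by a near-deterministic correlated structure (together with a mixing proportion to interpolate between the extreme volatilities $0$ and $\sigma\sqrt{H+1}$), the construction cannot produce any limit other than the volatility-$\sigma$ model.
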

\begin{proof}
Choose $\alpha\in\Gamma^H_c$ and let $0=t_0<t_1<\dots<t_J=1$, $\epsilon>0$,
$\rho_j:\mathbb R^{j} \rightarrow [\epsilon,\sigma\sqrt{H+1}]$, $j=0,1,...,J-1$
such that (\ref{4.102}) holds true.
The proof will be done in three steps.\\
\textbf{Step I:}
In this step we construct the probability measures ${\mathbb Q}_n$, $n\in\mathbb N$.
Fix $n\in\mathbb N$.
For any $j=0,1,...,J-1$ consider the interval
$I_j:=[[n t_j],[n t_{j+1}])$ and
let $U^{(j)}_1,U^{(j)}_2,U^{(j)}_3\subset I_j$ be disjoint sets such that
$U^{(j)}_1\bigcup U^{(j)}_2\bigcup U^{(j)}_3=I_j$. We allow $U^{(j)}_1,U^{(j)}_2,U^{(j)}_3$
to be random sets in the sense that they can depend on $S^{(n)}_1,....,S^{(n)}_{[nt_j]}$ (i.e. can depend on the
stock prices up to the moment $[nt_j]$).

From the simple estimate
$S^{(n)}_{k+1}= S^{(n)}_k\left(1+\frac{\sigma \xi_{k+1}}{\sqrt n}+O(n^{-1})\right)$
it follows that
for sufficiently large $n$ we can find a probability measure ${\mathbb Q}_n$
such that
\begin{eqnarray*}
&\mathbb E_{{\mathbb Q}_n}\left( S^{(n)}_{k+1}- S^{(n)}_k|\mathcal F_k\right)=0 \ \ \forall k\in U^{(j)}_1\\
&\mathbb E_{{\mathbb Q}_n}\left( S^{(n)}_{k+1}- S^{(n)}_k|\mathcal F_k\right)=
\left(1-\frac{1}{\sqrt n}\right)( S^{(n)}_{k}- S^{(n)}_{k-1})\ \ \forall k\in U^{(j)}_2\\
&\mathbb E_{{\mathbb Q}_n}\left(S^{(n)}_{k+1}-S^{(n)}_k|\mathcal F_k\right)=
-\left(1-\frac{1}{\sqrt n}\right)(S^{(n)}_{k}- S^{(n)}_{k-1})\ \ \forall k\in U^{(j)}_3.
\end{eqnarray*}
Now, we explain how to choose the sets $U^{(j)}_1,U^{(j)}_2,U^{(j)}_3$, $j=0,1,...,J-1$.
For any $j$
divide the interval $I_j$ into $[\sqrt n (t_{j+1}-t_j)]$ blocks of
the same number $[\sqrt n]$ of successive time points.
Because we restrict ourself to integer blocks it might happen that these blocks cover the interval
$I_j$, besides of $O(n^{-1/2})$ successive time points which lie in the right end of the interval.
We define all
the not covered time points to be elements of the set $U^{(j)}_1$.

Next, on each of the $[\sqrt n (t_{j+1}-t_j)]$ blocks of $[\sqrt n]$ successive time points we apply the following procedure.
First, each block
is divided into $[\sqrt n/(2H+2)]$ blocks of
$(2H+2)$ points, and again the missing points are defined to be elements of the set $U^{(j)}_1$.
Introduce the random variable
\begin{equation}\label{4.final+}
A^{(n)}_j=\frac{\rho^2_j( S^{(n)}_{[n t_1]},..., S^{(n)}_{[n t_j]})\sqrt n}{2\sigma^2 (H+1)^2}.
\end{equation}
In the first $A^{(n)}_j$ blocks,
for each block $\{k,k+1,...,k+2H+1\}$, the points
$k,k+H+1$ are sent to the set $U^{(j)}_1$ and the rest of the points are sent to the set
$U^{(j)}_2$.
In the remaining $[\sqrt n/(2H+2)]-A^{(n)}_j$ blocks (notice that $A^{(n)}_j\leq [\sqrt n/(2H+2)]$), for
each block $\{k,k+1,...,k+2H+1\}$, the points
$k,k+2,k+4,...$ are sent to the set $U^{(j)}_1$ and the rest of the points are sent to $U^{(j)}_3$.\\
\textbf{Step II:}
In this step we derive essential properties of the construction.
For any $n\in\mathbb N$ define the stochastic processes
$ {M}^{(n)}=\{{M}^{(n)}_k\}_{k=0}^n$
and $ {N}^{(n)}=\{{ N}^{(n)}_k\}_{k=0}^n$
by
\begin{eqnarray*}
&M^{(n)}_k= S^{(n)}_{\max\{m\leq k: m\in U^{(j)}_1\}}, \ \ [nt_j]\leq k< [nt_j]+1, \ \ j=0,1,...,J-1,\\
& N^{(n)}_k=\sum_{i=0}^{k-1}\frac{M^{(n)}_{i+1}-  M^{(n)}_{i}}{\Theta^{(n)}_{i} M^{(n)}_{i}},\ \ k=0,1,...,n
\end{eqnarray*}
where $\Theta^{(n)}_i=\rho_j( S^{(n)}_{[nt_1]},..., S^{(n)}_{[nt_j]})$ for $[nt_j]\leq i< [nt_{j+1}]$
and we
set $ M^{(n)}_n= M^{(n)}_{n-1}$. Introduce the filtration $\mathcal G^{(n)}=\{\mathcal G^{(n)}_k\}_{k=0}^n$
by
$$\mathcal G^{(n)}_k=\sigma\{ M^{(n)}_1,..., M^{(n)}_k,
 S^{(n)}_{[nt_1]},..., S^{(n)}_{[nt_j]}\} \ \  [nt_j]\leq k< [nt_{j+1}], \ \ j=0,1,...,J-1.$$
From the definition of the sets $U^{(j)}_1,U^{(j)}_2,U^{(j)}_3$ it follows that
$M^{(n)}$ is a ${\mathbb Q}_n$ martingale with respect to
the filtration $\mathcal G^{(n)}$.
Hence ${N}^{(n)}$ is also a martingale.
Next, we observe that for any sequence of $H+1$ successive time points there is at least one point which belongs to $U^{(j)}_1$ for some $j$.
This together with the definition of the
sets $U^{(j)}_2,U^{(j)}_3$ implies that (\ref{3.1}) holds true.
Thus ${\mathbb Q}_n\in\mathcal Q^H_n$ and
\begin{equation}\label{4.103+}
| M^{(n)}_k- S^{(n)}_k|\leq \frac{\tilde C}{\sqrt n}  S^{(n)}_k \ \ \forall k\leq n
\end{equation}
for some constant $\tilde C>0$.

Lemma 3.3 in \cite{BDP} and (\ref{4.103+}) implies that the sequence
$\left(\{ S^{(n)}_{[nt]}\}_{t=0}^1,{\mathbb Q}_n\right)$, $n\in\mathbb N$ is tight
and any cluster point is a strictly positive, continuous martingale.
Moreover, the martingales $M^{(n)}$, $n\in\mathbb N$ satisfy (\ref{4.5+}).
We need to prove that any cluster point satisfies (\ref{4.103}) (recall the uniqueness property of (\ref{4.103})).
Thus, choose a subsequence (we still index it by $n$)
$\left(\{S^{(n)}_{[nt]}\}_{t=0}^1,\tilde{\mathbb Q}_n\right)$
which converges to a martingale $M$.
We will apply the stability results from \cite{DP}. First, (\ref{4.5+}) implies that
the sequence $M^{(n)}$, $n\in\mathbb N$ satisfies Condition A in \cite{DP}.
Thus, from Theorem 4.1 in \cite{DP} we conclude (recall that $\rho_j\geq \epsilon>0$ for all $j$).
\begin{equation}\label{4.104}
\left(\{ N^{(n)}_{[nt]}\}_{t=0}^1,{\mathbb Q}_n
\right)\Rightarrow \left\{\sum_{j=0}^{J-1} \int_{t_j\wedge t }^{t_{j+1\wedge t }}\frac{dM_u}{\rho_j(M_{t_1},...,M_{t_j})M_u}\right\}_{t=0}^1.
\end{equation}
\\
\textbf{Step III:}
In view of (\ref{4.104}) in order to complete the proof we need
to show that
the sequence $\left(\{ N^{(n)}_{[nt]}\}_{t=0}^1,\mathbb Q_n\right)$, $n\in\mathbb N$
converges to the standard Brownian motion. We will apply the Martingale Central Limit theorem: Proposition 1 in \cite{R}.
Clearly, $N^{(n)}_k- N^{(n)}_{k-1}=O(n^{-1/2})$ for all $k\leq n$. Thus, it remains to
show that the predictable variation of
$ N^{(n)}$, $n\in\mathbb N$ satisfies
\begin{equation}\label{4.fin}
 \left(\{ \langle N^{(n)}\rangle_{[nt]}\}_{t=0}^1,{\mathbb Q}_n\right)\Rightarrow \{t\}_{t=0}^1.
\end{equation}

Fix $n\in\mathbb N$ and
consider an interval $[nt_j,n t_{j+1}]$ for some $j$.
Recall the construction in Step I and choose a block of
$[\sqrt n]$ successive time points.
We notice that in the first $A^{(n)}_j$ blocks, for any block
$\{k,k+1,...,k+2H+1\}$ we have
\begin{eqnarray*}
&S^{(n)}_k= M^{(n)}_k=M^{(n)}_{k+1}=...= M^{(n)}_{k+H}\\
&\mbox{and} \ \  S^{(n)}_{k+H+1}=M^{(n)}_{k+H+1}=...= M^{(n)}_{k+2H+1}.
\end{eqnarray*}
Moreover, from the definition of the set $U^{(j)}_2$ we get
\begin{eqnarray*}
&{\mathbb Q}_n\left( S^{(n)}_{k+H+1}= S^{(n)}_{k}\exp\left(\pm\sigma (H+1)n^{-1/2}\right)
|\mathcal F_k\right)=1-O(n^{-1/2}) \ \ \mbox{and}\\
&{\mathbb Q}_n\left( S^{(n)}_{k+2H+2}= S^{(n)}_{k+H+1}\exp\left(\pm\sigma (H+1)n^{-1/2}\right)
|\mathcal F_{k+H+1}\right)=1-O(n^{-1/2}).
\end{eqnarray*}
We conclude that for any block
$\{k,k+1,...,k+2H+1\}$ (of the first $A^{(n)}_j$ blocks)
\begin{eqnarray}\label{4.final1}
&\mathbb E_{{\mathbb Q}_n}\left(( N^{(n)}_{i+1}- N^{(n)}_i)^2|\mathcal G^{(n)}_i\right)=
\frac{\sigma^2(H+1)^2}{n\rho^2_j( S^{(n)}_{[nt_1]},..., S^{(n)}_{[nt_j]})}+O(n^{-3/2})\\
& \ i\in\{k+H,k+2H+1\} \ \ \mbox{and} \ \
\mathbb E_{{\mathbb Q}_n}\left((N^{(n)}_{i+1}-N^{(n)}_i)^2|\mathcal G^{(n)}_i\right)=0 \  \mbox{otherwise}.\nonumber
\end{eqnarray}
On the other hand for any block
$\{k,k+1,...,k+2H+1\}$ of the
remaining  $[\sqrt n  /(2H+2)]-A^{(n)}_j$ blocks we have
$ M^{(n)}_i= S^{(n)}_i$ for $i=k,k+2,...$
and $ M^{(n)}_i= S^{(n)}_{i-1}$ for $i=k+1,k+3,...$.
Furthermore, from the definition of the set $U^{(j)}_3$ it follows that
for any $i\in\{k,k+2,...\}$
$${\mathbb Q}_n\left({ S^{(n)}_{i+2}}={ S^{(n)}_{i}}\right)=1-O(n^{-1/2}).$$
We conclude that for any block
$\{k,k+1,...,k+2H+1\}$ (of the first remaining  $[\sqrt n  /(2H+2)]-A^{(n)}_j$ blocks)
\begin{equation}\label{4.final2}
\mathbb E_{{\mathbb Q}_n}\left(( N^{(n)}_{i+1}- N^{(n)}_i)^2|\mathcal G^{(n)}_i\right)=O(n^{-3/2}) \ \ \forall i.
\end{equation}
Finally, choose $j$ and $t_j<T_1<T_2<t_{j+1}$. From (\ref{4.final+}) and (\ref{4.final1})--(\ref{4.final2}) we obtain
\begin{eqnarray*}
&\langle  N^{(n)}\rangle _{[nT_2]}-\langle  N^{(n)}\rangle _{[nT_2]}=\frac{2 A^{(n)}_j n(T_2-T_2)}{[\sqrt n]}
\frac{\sigma^2(H+1)^2}{n\rho^2_j( S^{(n)}_{[nt_1]},..., S^{(n)}_{[nt_j]})}+O(n^{-1/2})=\\
&(T_2-T_1)+O(n^{-1/2})
\end{eqnarray*}
and (\ref{4.fin}) follows.
\end{proof}

\section*{Acknowledgments}
This research was supported by the ISF grant 160/17.

\bibliographystyle{spbasic}

\end{document}